\theoremstyle{plain}
\newtheorem{theorem}{Theorem}
\newtheorem{proposition}{Proposition}
\theoremstyle{definition}
\newtheorem{model}{Model}
\theoremstyle{remark}
\DeclareMathOperator{\Det}{Det}
\DeclareMathOperator{\Tr}{Tr}
\begin{document}
\begin{CJK}{UTF8}{gbsn}

\title{Cournot duopoly games with isoelastic demands and diseconomies of scale}

\author{Xiaoliang Li\thanks{Corresponding author: xiaoliangbuaa@gmail.com}}
\affil{School of Finance and Trade, Dongguan City College, Dongguan, China, 523419}

%\author[a]{Xiaoliang Li}
%
%\author[b]{Author 2\thanks{Corresponding author: name@mail.com}}
%
%\affil[a]{School of Finance and Trade, Dongguan City College, Dongguan, China, 523419}
%
%\affil[b]{School, University, City, Country}

\maketitle

\begin{abstract}
In this discussion draft, we investigate five different models of duopoly games, where the market is assumed to have an isoelastic demand function. Moreover, quadratic cost functions reflecting decreasing returns to scale are considered. The games in this draft are formulated with systems of two nonlinear difference equations. Existing equilibria and their local stability are analyzed by symbolic computations. In the model where a gradiently adjusting player and a rational (or a boundedly rational) player compete with each other, diseconomies of scale are proved to have an effect of stability enhancement, which is consistent with the similar results found by Fisher for homogeneous oligopolies with linear demand functions.
\end{abstract}

%\section{Introduction}

\section{Models}

Let us consider a market served by two firms producing homogeneous products. We use $q_i(t)$ to denote the output of firm $i$ at period $t$. Moreover, the cost function of firm $i$ is supposed to be quadratic, i.e., $C_i(q_i)=c_iq_i^2$ with $c_i>0$. At each period $t$, firm $i$ first estimates the possible price $p_i^e(t)$ of the product, then the expected profit of firm $i$ would be 
$$\Pi_i^e(t)=p_i^e(t) q_i(t) - c_iq_i^2(t),~~i=1,2.$$
In order to maximize the expected profit,  at period $t$ each firm would decide the quantity of the output by solving
$$q_i(t)=\arg\max_{q_i(t)}\Pi_i^e(t)=\arg\max_{q_i(t)}\left[p_i^e(t) q_i(t) - c_iq_i^2(t)\right],~~i=1,2.$$

Furthermore, assume that the demand function of the market is isoelastic, which is founded on the hypothesis that the consumers have the Cobb-Douglas utility function. Hence, the real (not expected) price of the product should be
$$p(Q)=\frac{1}{Q}=\frac{1}{q_1+q_2},$$
where $Q=q_1+q_2$ is the total supply. Five types of players with distinct rationality degrees are involved in this draft, which are described in detail as follows.

A \emph{rational player} not only knows clearly the form of the price function, but also has complete information of the decision of its rival. If firm $i$ is a rational player, at period $t+1$ we have
$$p_i^e(t+1)=\frac{1}{q_i(t+1)+q_{-i}^e(t+1)},$$
where $q_{-i}^e(t+1)$ is the expectation of the output of the rival. Due to the assumption of complete information, which means that $q_{-i}^e(t+1)=q_{-i}(t+1)$, it is acquired that the expected profit of firm $i$ would be
$$\Pi_i^e(t+1)=\frac{q_i(t+1)}{q_i(t+1)+q_{-i}(t+1)}-c_iq_i^2(t+1)$$
The first order condition for profit maximization gives rise to a cubic polynomial equation. To be exact, the condition for the reaction function of firm $i$ would be
\begin{equation}\label{eq:rational-cd-r}
q_{-i}(t+1)-2\,c_iq_i(t+1)(q_i(t+1)+q_{-i}(t+1))^2=0,
\end{equation}
which is simply denoted as $F_i(q_i(t+1),q_{-i}(t+1))= 0$ in the sequel.
The player could maximize its profit by solving the above equation. It is easy to verify that there exists only one positive solution for $q_i(t+1)$ if solving \eqref{eq:rational-cd-r}, but the expression could be quite complex. 
\begin{equation}\label{eq:r_i}
	q_i(t+1)=\frac{\sqrt[3]{2}M}{6c_i}+\frac{\sqrt[3]{4}c_iq_{-i}^2(t+1)}{3M}-\frac{2q_{-i}(t+1)}{3},
\end{equation}
where
$$M=\sqrt[3]{c_i^2	q_{-i}(t+1)(4c_iq_{-i}^2(t+1)+3\sqrt{3}\sqrt{8c_iq_{-i}^2(t+1)+27}+27)}$$

For simplicity, we temporarily denote \eqref{eq:r_i} by 
$$q_i(t+1)=R_i(q_{-i}(t+1)),$$ 
where $R_i$ is called the \emph{reaction function} of firm $i$. It is evident that if the two firms in the market are both rational players, the equilibrium (the best decision of output) would be arrived in a shot and there are no dynamics in the system. In order to tackle this problem, Puu introduced the bounded rational player in \cite{Puu1991C}.

A \emph{boundedly rational} player knows the form of the price function, but do not know the rival's decision of the production. If firm $i$ is a boundedly rational player, then it naively expects its competitor to produce the same quantity of output as the last period, i.e., $q_{-i}^e(t+1)=q_{-i}(t)$. Thus,
$$\Pi_i^e(t+1)=\frac{q_i(t+1)}{q_i(t+1)+q_{-i}(t)}-c_iq_i^2(t+1).$$
Then the best response for firm $i$ would be $q_i(t+1)=R_i(q_{-i}(t))$. 

A \emph{local monopolistic approximation} (LMA) player, which even does not know the exact form the price function, is less rational than a boundedly rational player. Specifically, if firm $i$ is an LMA player, then it just can  observe the current market price $p(t)$ and the corresponding total supply $Q(t)$, and is able to correctly estimate the slope $p'(Q(t))$ of the price function around the point $(p(t),Q(t))$.  Then firm $i$ uses such information to conjecture the demand function and expect the price at period $t+1$ to be
$$p_i^e(t+1)=p(Q(t))+p'(Q(t))(Q_i^e(t+1)-Q(t)),$$
where $Q_i^e(t+1)=q_i(t+1)+q_{-i}^e(t+1)$ represents the expected aggregate production of firm $i$ at period $t+1$. Moreover, an LMA player do not know the decision of its rival either, and is assumed to to use the naive expectation, i.e., $q_{-i}^e(t+1)=q_{-i}(t)$. Thus,  
$$p_i^e(t+1)=\frac{1}{Q(t)}-\frac{1}{Q^2(t)}(q_i(t+1)-q_i(t)).$$
The expected profit would be
$$\Pi^e_i(t+1)=q_i(t+1)\left[\frac{1}{Q(t)}-\frac{1}{Q^2(t)}(q_i(t+1)-q_i(t))\right]-c_iq_i^2(t+1).$$
By solving the first order condition, the best response for firm $i$ would be
$$q_i(t+1)=\frac{2\,q_i(t)+q_{-i}(t)}{2(1+c_i(q_i(t)+q_{-i}(t))^2)}.$$
For simplicity, we denote the above map as $q_i(t+1)=S_i(q_i(t),q_{-i}(t))$. 

In addition, we have an \emph{adaptive} player that decides the quantity of production according to its output of the previous period as well as the expectation of its rival. Specifically, if firm $i$ is an adaptive player, then at period $t+1$ this firm naively expects its competitor would produce the same quantity of output as the last period, i.e., $q_{-i}^e(t+1)=q_{-i}(t)$. Then the best response for firm $i$ would be $q_i(t+1)=R_i(q_{-i}(t))$. The adaptive decision mechanism for firm $i$ is that it choose the output $q_i(t+1)$ proportionally to be
$$q_i(t+1)=(1-L)q_i(t)+LR_i(q_{-i}(t)),$$
where $L\in(0,1)$ is a parameter controlling the proportion. It should be noticed that an adaptive player degenerate into a boundedly player if we suppose $L=1$. 

Furthermore, we consider a \emph{gradiently adjusting} player, which increases/decreases its output according to the information given by the marginal profit of the last period. Specifically, if firm $i$ is a gradiently adjusting player, then at period $t+1$ this firm is supposed to know its own profit function at period $t$, that is
$$\Pi_i(t)=\frac{q_i(t)}{q_i(t)+q_{-i}(t)}-c_iq_i^2(t).$$
Hence, firm $i$ could adjust its output at period $t+1$ with a gradient mechanism as
\begin{equation}
	q_i(t+1)=q_i(t) + K q_i(t) \frac{\partial \Pi_i(t)}{\partial q_i(t)},
\end{equation}
where 
$$\frac{\partial \Pi_i(t)}{\partial q_i(t)}=\frac{q_{-i}(t)}{(q_i(t)+q_{-i}(t))^2}-2c_iq_i(t)$$ 
is the marginal profit of firm $i$ as period $t$, and $K>0$ is a parameter controlling the adjustment speed. It is worth noting that the adjustment speed depends upon not only the parameter $K$ but also the size of the firm $q_i(t)$. One may observe from the above iteration map that a gradient adjusting player does not need to expect or guess the production output of its rival at the current period. Denote 
$$G_i(q_i(t),q_{-i}(t))=q_i(t)\frac{\partial \Pi_i(t)}{\partial q_i(t)}=\frac{q_i(t)q_{-i}(t)}{(q_i(t)+q_{-i}(t))^2}-2c_iq_i^2(t).$$ 
Naturally, the following models could be considered.

\begin{model}[GR]
\begin{equation}\label{eq:sys-gr}
	M_{GR}(q_1,q_2): 
	\left\{\begin{split}
		&q_1(t+1)=q_1(t)+K G_1(q_1(t),q_2(t)),\\
		&q_2(t+1)=R_2(q_1(t+1)),
	\end{split}
	\right.
\end{equation}
where $K>0$.
\end{model}

\begin{model}[GB]
	\begin{equation}\label{eq:sys-gr}
	M_{GB}(q_1,q_2): 
	\left\{\begin{split}
		&q_1(t+1)=q_1(t)+K G_1(q_1(t),q_2(t)),\\
		&q_2(t+1)=R_2(q_1(t)),
	\end{split}
	\right.
\end{equation}
where $K>0$.
\end{model}

\begin{model}[GL]
\begin{equation}\label{eq:sys-gl}
	M_{GL}(q_1,q_2): 
	\left\{\begin{split}
		&q_1(t+1)=q_1(t)+K G_1(q_1(t),q_2(t)),\\
		&q_2(t+1)=S_2(q_1(t),q_2(t)),
	\end{split}
	\right.
\end{equation}	
where $K>0$.
\end{model}

\begin{model}[GA]
\begin{equation}\label{eq:sys-gl}
	M_{GL}(q_1,q_2): 
	\left\{\begin{split}
		&q_1(t+1)=q_1(t)+K G_1(q_1(t),q_2(t)),\\
		&q_2(t+1)=(1-L)q_2(t)+ LR_2(q_1(t)),
	\end{split}
	\right.
\end{equation}	
where $K>0$ and $0<L<1$.
\end{model}

\begin{model}[GG]
	\begin{equation}\label{eq:sys-gg}
	M_{GG}(q_1,q_2): 
	\left\{\begin{split}
		&q_1(t+1)=q_1(t)+K_1 G_1(q_1(t),q_2(t)),\\
		&q_2(t+1)=q_2(t)+K_2 G_2(q_2(t),q_1(t)),\\
	\end{split}
	\right.
\end{equation}
where $K_1>0$ and $K_2>0$.
\end{model}

It is easy to verify that all the above models has one unique Nash equilibrium, of which the closed-form expression is 
$$E=\left[~\frac{\sqrt{c_2}}{\sqrt{c_1}+\sqrt{c_2}}\frac{1}{\sqrt{2\sqrt{c_1c_2}}},~ \frac{\sqrt{c_1}}{\sqrt{c_1}+\sqrt{c_2}}\frac{1}{\sqrt{2\sqrt{c_1c_2}}}~\right].$$

\section{Local Stability}

In this section, we study the local stability of the unique Nash equilibrium of all the models. Indeed, for each model, this problem could be transformed into determining the existence of real solutions of a system formulated by polynomial equations and inequalities. Afterward, the symbolic approach proposed by the author and his coworker in \cite{Li2014C} is used to systematically address the resulting systems. It should be noticed that the processes of computations in this paper are similar to \cite{Li2021D}, but the considered models are different.

\subsection{Model GR}

This model could be equivalently described by a one-dimensional iteration map as follows.
\begin{equation}\label{eq:gr-map-dim1}
	M_{GR}(q_1): 
	q_1(t+1)=q_1(t)+Kq_1(t)G_1(q_1(t),R_2(q_1(t))).
\end{equation}
Hence, the stable equilibria are the solutions of the following system.
\begin{equation}\label{eq:gr-system}
	\left\{\begin{split}
		&K\cdot G_1(q_1,q_2)=0,\\
		&F_2(q_2,q_1)=0,\\
		&q_1>0,~q_2>0,\\
		&1+ \left(1+K\frac{{\rm d}[q_1G_1(q_1,R_2(q_1))]}{{\rm d} q_1}\right)>0,\\
		&1- \left(1+K\frac{{\rm d}[q_1G_1(q_1,R_2(q_1))]}{{\rm d} q_1}\right)>0,\\
		&c_1>0,~c_2>0,~K>0.
	\end{split}
	\right.
\end{equation}

For the above system, the squarefree part of the border polynomial is
\begin{align*}
	SP^*_{GR}=\,&c_1c_2K(c_1-c_2) (c_1+c_2) (c_1-1/9\,c_2)(c_1^3c_2K^4-3/2\,c_1^2c_2K^2-81/64\,c_1^2+9/32\,c_1c_2-1/64\,c_2^2).
\end{align*}
%$$SP^*_{GR}=ca*cb*k*(ca-cb)*(ca+cb)*(ca-1/9*cb)*(ca^3*cb*k^4-3/2*ca^2*cb*k^2-81/64*ca^2+9/32*ca*cb-1/64*cb^2).$$
We select the sample points as
$$(1, 1/2, 2),~(1, 2, 1),~(1, 2, 2),~(1, 10, 1),~(1, 10, 2).$$

By checking the number of real solution of \eqref{eq:gr-system} at these sample points, the following results are finally acquired.

\begin{theorem}
For Model GR, there exists a unique equilibrium with $q_1,q_2>0$. Moreover, this equilibrium is locally stable if $R_{GR}^1<0$, where 
$$R_{GR}^1=
64\,c_1^3c_2K^4-96\,c_1^2c_2K^2-81\,c_1^2+18\,c_1c_2-c_2^2.
$$

\end{theorem}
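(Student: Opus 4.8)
The plan is to treat the two assertions separately: existence and uniqueness of the positive equilibrium follows by elementary elimination using the closed form already recorded for $E$, while the stability criterion is obtained by feeding the semi-algebraic system \eqref{eq:gr-system} into the border-polynomial algorithm of \cite{Li2014C} and reading the real-solution count off the given sample points.

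\textbf{Existence and uniqueness.} A fixed point of the scalar map \eqref{eq:gr-map-dim1} must satisfy $G_1(q_1,q_2)=0$ together with $q_2=R_2(q_1)$, i.e. $F_2(q_2,q_1)=0$. Since, by \eqref{eq:rational-cd-r}, $G_1(q_1,q_2)=\frac{q_1}{(q_1+q_2)^2}\,F_1(q_1,q_2)$ and $q_1>0$, the equation $G_1=0$ is equivalent to $F_1(q_1,q_2)=0$; hence the positive fixed points coincide with the positive zeros of $F_1=F_2=0$, the Nash system. Dividing $q_2=2c_1q_1(q_1+q_2)^2$ by $q_1=2c_2q_2(q_1+q_2)^2$ gives $c_1q_1^2=c_2q_2^2$, so $q_2/q_1=\sqrt{c_1/c_2}$; back-substitution produces exactly the coordinates of $E$ and shows it is the only solution with $q_1,q_2>0$.

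\textbf{Reduction to a semi-algebraic system.} Because the rational rival always lies on its reaction curve, $q_2(t)=R_2(q_1(t))$ for $t\ge1$, the dynamics collapse to the one-dimensional map \eqref{eq:gr-map-dim1}, whose local stability at the fixed point is governed by $|f'(q_1^*)|<1$ with $f'(q_1^*)=1+K\,\frac{\mathrm{d}[q_1G_1(q_1,R_2(q_1))]}{\mathrm{d}q_1}$. The two inequalities $f'>-1$ and $f'<1$ are precisely the two bifurcation inequalities in \eqref{eq:gr-system}; together with $G_1=0$, $F_2=0$ and the positivity and parameter constraints they constitute \eqref{eq:gr-system}, so local stability of $E$ is equivalent to this system having its (unique) real solution.

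\textbf{Border polynomial, sampling, and the main obstacle.} I would then invoke \cite{Li2014C}: on each connected component of the complement of $V(SP^*_{GR})$ the number of real solutions of \eqref{eq:gr-system} is constant. In the orthant $c_1,c_2,K>0$ the factors $c_1,c_2,K,(c_1+c_2)$ keep their sign, so only $(c_1-c_2)$, $(c_1-\tfrac19 c_2)$ and the last factor, which equals $R_{GR}^1/64$, can vanish. Using the scaling $(c_1,c_2,K)\mapsto(\lambda c_1,\lambda c_2,\lambda^{-1/2}K)$—a symmetry of the map under which $q_i\mapsto\lambda^{-1/2}q_i$, so that both feasibility and $\sgn R_{GR}^1$ are preserved—I would normalise $c_1=1$ and reduce to the quarter-plane $(c_2,K)$ cut by the lines $c_2=1$, $c_2=9$ and the curve $R_{GR}^1=0$. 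Evaluating \eqref{eq:gr-system} at the five sample points returns solution count $1$ exactly when $R_{GR}^1<0$ and $0$ when $R_{GR}^1>0$, while the pairs straddling $c_2=1$ and $c_2=9$ at a fixed sign of $R_{GR}^1$ return the same count; this shows $(c_1-c_2)$ and $(c_1-\tfrac19 c_2)$ are extraneous and that $E$ is stable iff $R_{GR}^1<0$. The delicate step—the main obstacle—is exactly this bookkeeping: proving the five points meet every relevant component of $\mathbb{R}^3\setminus V(SP^*_{GR})$ (equivalently, that after discarding the two spurious line factors the sign of $R_{GR}^1$ alone labels the cells) and certifying the real-root count at each point via the isolation procedure of \cite{Li2014C}.
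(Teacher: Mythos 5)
Your proposal takes essentially the same route as the paper: the paper likewise reduces Model GR to the one-dimensional map \eqref{eq:gr-map-dim1}, encodes local stability as the semi-algebraic system \eqref{eq:gr-system}, computes the squarefree border polynomial $SP^*_{GR}$ (whose last factor is $R_{GR}^1/64$, the linear factors being discarded as extraneous), and reads the real-solution count off the same five sample points $(1,1/2,2),\ldots,(1,10,2)$. Your supplementary details --- the factorization $G_1=\frac{q_1}{(q_1+q_2)^2}F_1$ yielding uniqueness of the positive equilibrium via $c_1q_1^2=c_2q_2^2$, and the weighted scaling $(c_1,c_2,K)\mapsto(\lambda c_1,\lambda c_2,\lambda^{-1/2}K)$ justifying the normalization $c_1=1$ --- are correct and in fact make explicit what the paper only asserts.
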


%`Critical polynomial is:`
%`LXL cp=`, ca*cb*k*(ca+cb)*(ca*k+cb*k-4)
%
%`LXL sample:`, [1, 1, 1], `num:`, 1, `time:`, .6e-2
%`LXL sample:`, [1, 1, 3], `num:`, 0, `time:`, .3e-2

If we consider the counterpart of this model with quadratic costs replaced by linear costs, the following theorem is obtained.

\begin{theorem}
For Model GR, if $C_1(q_1)=c_1q_1$ and $C_2(q_2)=c_2q_2$, there exists a unique equilibrium with $q_1,q_2>0$. Moreover, this equilibrium is locally stable if $R_{GR}^2<0$, where
$$R_{GR}^2=c_1K+c_2K-4<0.$$
%$$R_{GR}^2=ca*k+cb*k-4<0.$$
\end{theorem}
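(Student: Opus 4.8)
The plan is to carry out, for the linear-cost specification, the same two-stage analysis that underlies the quadratic-cost Theorem --- first the existence and uniqueness of the positive equilibrium, then the one-dimensional local stability --- while exploiting the fact that the linear case is elementary enough to bypass the border-polynomial/sample-point machinery of \cite{Li2014C} entirely. First I would recompute the two primitive maps. With $C_i(q_i)=c_iq_i$ the profit is $\Pi_i^e=\frac{q_i}{q_i+q_{-i}}-c_iq_i$, which is strictly concave in $q_i$, so its first-order condition reduces from the cubic \eqref{eq:rational-cd-r} to the quadratic $F_i(q_i,q_{-i})=q_{-i}-c_i(q_i+q_{-i})^2=0$; taking the positive branch yields the explicit reaction function $R_2(q_1)=\sqrt{q_1/c_2}-q_1$. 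Likewise the gradient operator becomes $G_1(q_1,q_2)=\dfrac{q_1q_2}{(q_1+q_2)^2}-c_1q_1$. The crucial simplification is that $R_2$ is now a closed-form radical instead of the unwieldy expression \eqref{eq:r_i}, so every subsequent step stays elementary.

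Next I would locate the equilibrium by solving $G_1=0$ together with $F_2(q_2,q_1)=0$ on the positive orthant. Writing $s=q_1+q_2$, the two relations read $q_2=c_1s^2$ and $q_1=c_2s^2$; adding them forces $s=1/(c_1+c_2)$, whence $q_1=c_2/(c_1+c_2)^2$ and $q_2=c_1/(c_1+c_2)^2$. Both coordinates are manifestly positive and the solution is unique for $c_1,c_2>0$, which settles the first assertion. I would also record that $q_1<1/c_2$ at this point, so $R_2(q_1)>0$ there, confirming consistency with $q_2>0$ and with the correct branch choice.

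For stability I would pass to the one-dimensional map obtained by substituting $q_2=R_2(q_1)$ into the gradient update, namely $f(q_1)=q_1+KG_1(q_1,R_2(q_1))$. Here the composition collapses pleasantly: since $q_1+R_2(q_1)=\sqrt{q_1/c_2}$, one gets $G_1(q_1,R_2(q_1))=\sqrt{c_2q_1}-(c_1+c_2)q_1$, so that $f'(q_1)=1+K\bigl(\tfrac{1}{2}\sqrt{c_2/q_1}-(c_1+c_2)\bigr)$. Evaluating at $q_1=c_2/(c_1+c_2)^2$ gives $\tfrac{1}{2}\sqrt{c_2/q_1}=\tfrac{1}{2}(c_1+c_2)$ and hence $f'=1-\tfrac{1}{2}K(c_1+c_2)$. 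The criterion $|f'|<1$ then splits into $f'<1$, which holds automatically because $K,c_1,c_2>0$, and $f'>-1$, which is precisely $K(c_1+c_2)<4$, i.e. $R_{GR}^2=c_1K+c_2K-4<0$.

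I expect no serious obstacle, exactly because the linear specification removes the algebraic bottleneck of the quadratic model. The only points requiring care are branch selection --- taking the positive square root in $R_2$ and verifying the equilibrium lies in its domain of validity --- and confirming that the inequality $f'<1$ is vacuous, so that the single relation $R_{GR}^2<0$ already captures local stability. If one nevertheless wished to mirror the symbolic pipeline used for the quadratic case, the border polynomial would here be essentially affine in $K(c_1+c_2)$ and one sample point would fix its sign, but the direct linearization above is cleaner and self-contained.
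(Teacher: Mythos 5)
Your proposal is correct, but it takes a genuinely different route from the paper. The paper states this linear-cost theorem without a written proof; its methodology, used uniformly for all five models, is the symbolic pipeline of \cite{Li2014C}: assemble the semi-algebraic stability system (the analogue of \eqref{eq:gr-system} with $F_2$ replaced by its linear-cost version $q_1-c_2(q_1+q_2)^2=0$), compute the squarefree part of the border polynomial in $(c_1,c_2,K)$, pick sample points in each cell of its complement, and count real solutions at each sample point. You instead exploit the closed forms that only the linear specification affords: $R_2(q_1)=\sqrt{q_1/c_2}-q_1$, the collapse $q_1+R_2(q_1)=\sqrt{q_1/c_2}$, hence $G_1(q_1,R_2(q_1))=\sqrt{c_2q_1}-(c_1+c_2)q_1$, the unique positive fixed point $q_1^*=c_2/(c_1+c_2)^2$, $q_2^*=c_1/(c_1+c_2)^2$, and $f'(q_1^*)=1-\tfrac{1}{2}K(c_1+c_2)$, so that $|f'|<1$ reduces exactly to $K(c_1+c_2)<4$, i.e.\ $R_{GR}^2<0$; I checked each of these computations and they are right, including the branch condition $q_1^*<1/c_2$ and the observation that $f'<1$ is vacuous, which explains why the theorem needs only one inequality rather than the two Jury-type conditions of \eqref{eq:gr-system}. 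What each approach buys: yours is elementary, self-contained and human-verifiable; the paper's machinery is uniform and scales to the quadratic-cost case, where $R_2$ is the unwieldy radical \eqref{eq:r_i} and no such direct linearization is feasible. One incidental payoff of your derivation: it is consistent with the model definition $q_1(t+1)=q_1(t)+KG_1(q_1(t),q_2(t))$, whereas the displayed reduction \eqref{eq:gr-map-dim1} carries a spurious extra factor $q_1(t)$ (note $G_1$ already contains the factor $q_1$); with that extra factor the threshold would come out as $Kc_2<4(c_1+c_2)$ rather than $K(c_1+c_2)<4$, so your reading is the one that reproduces the stated $R_{GR}^2$ and flags a typo in the paper.
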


\begin{figure}[htbp]
  \centering
    \subfigure[]{\includegraphics[width=0.4\textwidth]{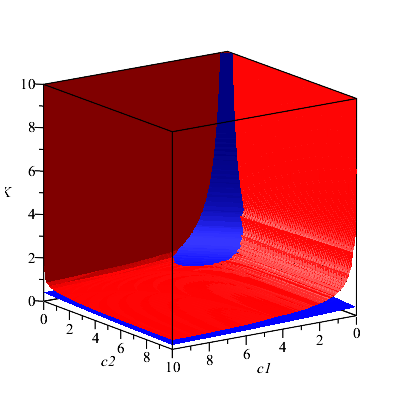}} 
    \subfigure[]{\includegraphics[width=0.4\textwidth]{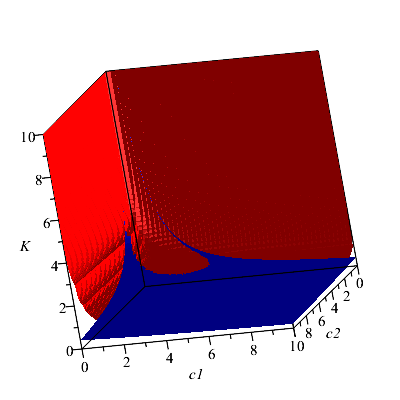}}\\
    \subfigure[$c_1=c_2$.]{\includegraphics[width=0.4\textwidth]{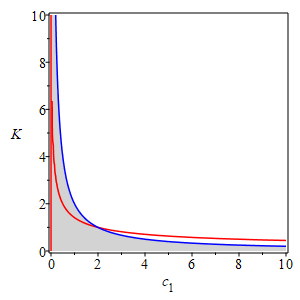}} 
    \subfigure[$K=1$.]{\includegraphics[width=0.4\textwidth]{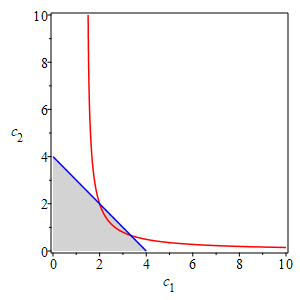}}
  \caption{The 3-dimensional $(c_1,c_2,K)$ parameter space of Model GR. The red surface is $R_{GR}^1=0$, and the blue surface is $R_{GR}^2=0$.}
    \label{fig:par-gr}
\end{figure}

The following proposition is consistent with the stability enhancement effect of diseconomies of scale found by Fisher in \cite{Fisher1961T}.

\begin{proposition}
	For Model GR, if $c_1>4$ or $c_2>3$, the stable region for the linear costs $C_i(q_i)=c_iq_i$ is strictly contained in that for the quadratic costs $C_i(q_i)=c_iq_i^2$.
\end{proposition}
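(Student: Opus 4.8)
The plan is to reduce the set-containment claim to a single polynomial inequality in $c_1,c_2$ and then dispatch that inequality by an elementary case analysis. By the two preceding theorems, the linear-cost stable set is $\mathcal{L}=\{(c_1,c_2,K):c_1,c_2,K>0,\ K(c_1+c_2)<4\}$ and the quadratic-cost stable set is $\mathcal{Q}=\{(c_1,c_2,K):c_1,c_2,K>0,\ R_{GR}^1<0\}$. First I would fix $(c_1,c_2)$ and describe both sets as intervals in $K$. Writing $R_{GR}^1$ as a quadratic in $u=K^2$, namely $64c_1^3c_2u^2-96c_1^2c_2u-(9c_1-c_2)^2$, its leading coefficient is positive while its constant term $-(9c_1-c_2)^2$ is nonpositive; hence it has exactly one positive root $u^\ast$, and $R_{GR}^1<0$ iff $0<K<K^\ast:=\sqrt{u^\ast}$. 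Thus for fixed $(c_1,c_2)$ the quadratic-cost slice is $(0,K^\ast)$ and the linear-cost slice is $(0,4/(c_1+c_2))$, so both the inclusion and its strictness are equivalent to $4/(c_1+c_2)<K^\ast$, i.e. to the strict inequality $R_{GR}^1<0$ evaluated at $K=4/(c_1+c_2)$.

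Next I would substitute $K=4/(c_1+c_2)$ into $R_{GR}^1$ and clear the positive denominator $(c_1+c_2)^4$, reducing the whole statement to proving $\Phi(c_1,c_2)<0$ on the region $c_1>4$ or $c_2>3$, where $\Phi=16384\,c_1^3c_2-1536\,c_1^2c_2(c_1+c_2)^2-(9c_1-c_2)^2(c_1+c_2)^4$. The point to exploit is that the only positive contribution to $\Phi$ is the monomial $16384\,c_1^3c_2$. The case $c_2>3$ is then easy: I would show the middle term alone already exceeds the first, which amounts to $3(c_1+c_2)^2>32c_1$; substituting $c_2>3$ this follows from $3c_1^2-14c_1+27>0$, a quadratic with negative discriminant, hence $\Phi<0$.

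The case $c_1>4$ is the main obstacle, because near $c_1\approx4$ with $c_2$ of moderate size neither negative term of $\Phi$ dominates the positive one on its own, so no single-term bound suffices. My plan here is to homogenise along rays: set $x=c_2/c_1>0$ and $a=c_1$, so that $\Phi/c_1^4$ becomes $P_x(a)=-(9-x)^2(1+x)^4a^2-1536\,x(1+x)^2a+16384\,x$, a downward parabola in $a$ (a decreasing line when $x=9$). Since $P_x(0)=16384\,x>0$, a value $P_x(4)<0$ forces $4$ to lie past the larger root, whence $P_x(a)<0$ for all $a>4$; so it suffices to check $P_x(4)<0$ for every $x>0$. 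Here $P_x(4)=-g(x)$ with $g(x)=16(9-x)^2(1+x)^4+6144\,x(1+x)^2-16384\,x$, and the remaining task is the univariate positivity $g(x)>0$ for $x>0$, which I would verify by a Sturm sequence (or by exhibiting an explicit lower bound); the only delicate feature is a shallow minimum of $g$ near $x\approx0.15$, where $g$ stays around $950>0$.

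Finally, each case yields the strict inequality $\Phi<0$, hence $4/(c_1+c_2)<K^\ast$ throughout the region $c_1>4$ or $c_2>3$. This gives both the inclusion $\mathcal{L}\subseteq\mathcal{Q}$ there and its properness, since for every such $(c_1,c_2)$ the nonempty interval $(4/(c_1+c_2),\,K^\ast)$ consists of points lying in $\mathcal{Q}$ but not in $\mathcal{L}$. I expect the verification of $g(x)>0$ to be the step requiring the most care; the interval bookkeeping, the reduction to $\Phi$, and the $c_2>3$ estimate are all routine by comparison.
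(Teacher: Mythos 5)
Your proposal is correct, and it cannot be ``the same as the paper'' in any meaningful sense: the paper's own proof of this proposition is omitted entirely (``The proof is tedious and we leave it to the readers''), so your argument actually supplies what the paper withholds. I checked the key steps. The reduction is sound: writing $R_{GR}^1$ in $u=K^2$ as $64c_1^3c_2u^2-96c_1^2c_2u-(9c_1-c_2)^2$ is right (note $-81c_1^2+18c_1c_2-c_2^2=-(9c_1-c_2)^2$), and your one-positive-root analysis covers the degenerate case $c_2=9c_1$ as well, so each $(c_1,c_2)$-slice of the quadratic-cost stable set is indeed $(0,K^\ast)$ and the whole claim reduces to $\Phi(c_1,c_2)<0$ with $\Phi=16384\,c_1^3c_2-1536\,c_1^2c_2(c_1+c_2)^2-(9c_1-c_2)^2(c_1+c_2)^4$, which I verified against the substitution $K=4/(c_1+c_2)$. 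Your $c_2>3$ case is fine ($3c_1^2-14c_1+27$ has discriminant $-128$), and the downward-parabola argument in $a=c_1$ along rays $x=c_2/c_1$ is valid, including $x=9$. The one step you leave open, $g(x)>0$ for $x>0$, does hold and needs no Sturm sequence: expanding gives $g(x)=16x^6-224x^5+240x^4+9664x^3+18928x^2-5344x+1296$, and one can split this as $16x^3\left(x^3-14x^2+15x+604\right)+\left(18928x^2-5344x+1296\right)$; the quadratic factor has negative discriminant ($5344^2-4\cdot 18928\cdot 1296<0$), and the cubic $x^3-14x^2+15x+604$ is positive for $x>0$ (its only local minimum on $x>0$, near $x\approx 8.8$, has value about $333$), so both summands are positive. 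With that, your minimum near $x\approx 0.13$--$0.15$ (value $\approx 942$, consistent with your estimate) is confirmed and the proof is complete; your final observation that the interval $\left(4/(c_1+c_2),K^\ast\right)$ witnesses properness of the inclusion is also correct.
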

\begin{proof}
The proof is tedious and we leave it to the readers.
\end{proof}

\subsection{Model GB}

The Jacobian matrix is
\begin{equation}
J_{GB}=\left[
	\begin{matrix}
	1+K\cdot\partial G_1/\partial q_1 & K\cdot\partial G_1/\partial q_2\\
	{\rm d} R_2/{\rm d} q_1 & 0
	\end{matrix}
\right]
\end{equation}
Hence, the stable equilibria can be described by 
\begin{equation}\label{eq:semi-lb}
	\left\{\begin{split}
		&K\cdot G_1(q_1,q_2)=0,\\
		&F_2(q_2,q_1)=0,\\
		&q_1>0,~q_2>0,\\
		&1+\Tr(J_{GB})+\Det(J_{GB})>0,\\
		&1-\Tr(J_{GB})+\Det(J_{GB})>0,\\
		&1-\Det(J_{GB})>0,\\
		&c_1>0,~c_2>0,~K>0.
	\end{split}
	\right.
\end{equation}

%ca*cb*k*(ca+cb)*(ca-cb)*(ca-1/9*cb)*(ca^7*cb*k^4-4*ca^6*cb^2*k^4+6*ca^5*cb^3*k^4-4*ca^4*cb^4*k^4+ca^3*cb^5*k^4-78*ca^6*cb*k^2-680*ca^5*cb^2*k^2-980*ca^4*cb^3*k^2-296*ca
%^3*cb^4*k^2-14*ca^2*cb^5*k^2-81/4*ca^6+171/2*ca^5*cb-559/4*ca^4*cb^2+109*ca^3*cb^3-159/4*cb^4*ca^2+11/2*cb^5*ca-1/4*cb^6)*(ca^7*cb*k^4-68*ca^6*cb^2*k^4+1158*ca^5*cb^3*k^4-68*ca^4*
%cb^4*k^4+ca^3*cb^5*k^4+66*ca^6*cb*k^2-616*ca^5*cb^2*k^2-1524*ca^4*cb^3*k^2+24*ca^3*cb^4*k^2+2*ca^2*cb^5*k^2-81/4*ca^6+171/2*ca^5*cb-559/4*ca^4*cb^2+109*ca^3*cb^3-159/4*cb^4*ca^2+
%11/2*cb^5*ca-1/4*cb^6)
%
%[[[1, 1/64, 1], 1], [[1, 1/2, 1], 1], [[1, 2, 1], 1], [[1, 10, 1], 1], [[1, 34, 1], 1]], [[[1, 1/64, 1], 1], [[1, 1/2, 1], 1], [[1, 2, 1],
%1], [[1, 10, 1], 1], [[1, 34, 1], 1]]

Afterward, we acquire the following theorem.

\begin{theorem}
For Model GB, there exists a unique equilibrium with $q_1,q_2>0$. Moreover, this equilibrium is locally stable if $R_{GB}^1<0$,
where 
\begin{align*}
R_{GB}^1=&\,
4\,c_1^7c_2K^4-272\,c_1^6c_2^2K^4+4632\,c_1^5c_2^3K^4-272\,c_1^4c_2^4K^4+4\,c_1^3c_2^5K^4+264\,c_1^6c_2K^2\\
&-2464\,c_1^5c_2^2K^2-6096\,c_1^4c_2^3K^2+96\,c_1^3c_2^4K^2+8\,c_1^2c_2^5K^2-81\,c_1^6+342\,c_1^5c_2\\
&-559\,c_1^4c_2^2+436\,c_1^3c_2^3-159\,c_1^2c_2^4+22\,c_1c_2^5-c_2^6.
\end{align*}
%$$R_{GB}^1=
%4*ca^7*cb*k^4-272*ca^6*cb^2*k^4+4632*ca^5*cb^3*k^4-272*ca^4*cb^4*k^4+4*ca^3*cb^5*k^4+264*ca^6*cb*k^2-2464*ca^5*cb^2*k^2-6096*ca^4*cb^3*k^2+96*ca^3*cb^4*k^2+8*ca^2*cb^5*k^2-81*ca^6+342*ca^5*cb-559*ca^4*cb^2+436*ca^3*cb^3-159*ca^2*cb^4+22*ca*cb^5-cb^6.
%$$
\end{theorem}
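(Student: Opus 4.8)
The plan is to follow exactly the template already set for Model GR, adapting it to the Jacobian $J_{GB}$ and the Jury conditions recorded in \eqref{eq:semi-lb}. First I would settle existence and uniqueness. A fixed point of $M_{GB}$ satisfies $K\,G_1(q_1,q_2)=0$ together with $q_2=R_2(q_1)$. Since $G_1(q_1,q_2)=\frac{q_1}{(q_1+q_2)^2}F_1(q_1,q_2)$, for $q_1>0$ the first equation is equivalent to $F_1(q_1,q_2)=0$, while $q_2=R_2(q_1)$ is by definition $F_2(q_2,q_1)=0$. Hence the positive fixed points are precisely the positive solutions of the Nash first-order conditions $F_1=F_2=0$, whose unique positive solution is the point $E$ displayed in Section 1; I would record this by substituting $E$ and checking $F_1(E)=F_2(E)=0$ directly.

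Second, I would linearize at $E$. The entries $\partial G_1/\partial q_1$ and $\partial G_1/\partial q_2$ are rational in $q_1,q_2,c_1$, and $\mathrm{d}R_2/\mathrm{d}q_1$ is obtained by implicit differentiation of $F_2(q_2,q_1)=0$, namely $\mathrm{d}R_2/\mathrm{d}q_1=-(\partial F_2/\partial q_1)/(\partial F_2/\partial q_2)$, which keeps everything polynomial and avoids the radical in \eqref{eq:r_i}. Substituting these into $\Tr(J_{GB})$ and $\Det(J_{GB})$ and imposing the three Jury inequalities $1+\Tr(J_{GB})+\Det(J_{GB})>0$, $1-\Tr(J_{GB})+\Det(J_{GB})>0$, $1-\Det(J_{GB})>0$, together with $F_1=F_2=0$, $q_1,q_2>0$, and $c_1,c_2,K>0$, produces the semi-algebraic system \eqref{eq:semi-lb}.

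Third, the substantive step, I would apply the real-root-classification machinery of \cite{Li2014C} to \eqref{eq:semi-lb}. Eliminating $q_1,q_2$ through the equations $F_1=F_2=0$ (rather than the explicit cube root, so the elimination stays polynomial) yields the border polynomial of the system; its squarefree part $SP^*_{GB}$ has irreducible factors whose vanishing, together with the walls of the defining inequalities, partitions the open parameter cone $\{c_1,c_2,K>0\}$ into finitely many connected cells on which the number of stable equilibria is constant. I would then pick one sample point per cell, as was done for GR with the list $(1,1/2,2),\dots,(1,10,2)$, and at each sample count the real solutions of \eqref{eq:semi-lb} by real-root isolation. Comparing the sign of each factor of $SP^*_{GB}$ against the stability verdict at the samples identifies $R_{GB}^1$ as the single factor controlling the transition, so that the equilibrium is locally stable exactly when $R_{GB}^1<0$.

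The main obstacle will be the elimination producing $SP^*_{GB}$: because $R_2$ is defined implicitly through a cubic, the resultant computations are heavy and the border polynomial factors into many pieces (the GR analogue already carried a factor of degree four in $K$), so the delicate part is both carrying the computation to completion and correctly isolating which factor, here $R_{GB}^1$ — a polynomial reaching total degree eight in $(c_1,c_2)$ and order $K^4$ — is the genuine stability boundary rather than a spurious component of the discriminantal variety. The sample-point test is precisely what pins this down rigorously.
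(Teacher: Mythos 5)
Your proposal matches the paper's method exactly: the paper likewise encodes stability as the semi-algebraic system \eqref{eq:semi-lb} (using $F_2(q_2,q_1)=0$ to stay polynomial, just as you do), and then applies the border-polynomial plus sample-point machinery of \cite{Li2014C}, precisely as displayed for Model GR. Your additional observations --- the factorization $G_1=\frac{q_1}{(q_1+q_2)^2}F_1(q_1,q_2)$ reducing fixed points to the Nash conditions, and implicit differentiation of $F_2$ to avoid the radical in \eqref{eq:r_i} --- are correct and consistent with how the paper's computation must proceed.
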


%ca*cb*k*(ca+cb)*(ca^2*k-6*ca*cb*k+cb^2*k+4*ca+4*cb)*(ca^2*k-2*ca*cb*k+cb^2*k-2*ca-2*cb)*(ca^2*k-2*ca*cb*k+cb^2*k+2*ca+2*cb)
%
% [[[1, 1/8, 1], 1], [[1, 1/4, 1], 1], [[1, 1/2, 1], 1], [[1, 2, 1], 1], [[1, 4, 1], 1], [[1, 6, 1/2], 1]], [[[1, 1/8, 1], 1], [[1, 1/4, 1],
%1], [[1, 1/2, 1], 1], [[1, 2, 1], 1], [[1, 4, 1], 1], [[1, 6, 1/2], 1]]

The linear case was first studied in \cite{Tramontana2010H} and is restated as follows.

\begin{proposition}
For Model GB, if $C_1(q_1)=c_1q_1$ and $C_2(q_2)=c_2q_2$, there exists a unique equilibrium with $q_1,q_2>0$. Moreover, this equilibrium is locally stable if $R_{GB}^2>0$ and $R_{GB}^3<0$, where
\begin{align*}
	R_{GB}^2=\,& c_1^2K-6\,c_1c_2K+c_2^2K+4\,c_1+4\,c_2,\\
	R_{GB}^2=\,& c_1^2K-2\,c_1c_2K+c_2^2K-2\,c_1-2\,c_2.
\end{align*}
%\begin{align*}
%	R_{GB}^2=& ca^2*k-6*ca*cb*k+cb^2*k+4*ca+4*cb,\\
%	R_{GB}^2=& ca^2*k-2*ca*cb*k+cb^2*k-2*ca-2*cb.
%\end{align*}
\end{proposition}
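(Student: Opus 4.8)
The plan is to reduce the proposition to the three Jury (Schur--Cohn) conditions for a $2\times 2$ Jacobian, exactly the reduction already encoded in system \eqref{eq:semi-lb} for the quadratic-cost version. First I would instantiate the relevant maps for linear costs. With $C_i(q_i)=c_iq_i$ the gradient term becomes
$$G_1(q_1,q_2)=\frac{q_1q_2}{(q_1+q_2)^2}-c_1q_1,$$
and the rational reaction function, obtained from the first-order condition $q_1/(q_1+q_2)^2=c_2$, is $R_2(q_1)=\sqrt{q_1/c_2}-q_1$. An interior equilibrium must satisfy $G_1(q_1,q_2)=0$ together with $q_2=R_2(q_1)$; for $q_1,q_2>0$ these are equivalent to $q_2=c_1(q_1+q_2)^2$ and $q_1=c_2(q_1+q_2)^2$. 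Adding the two relations gives $q_1+q_2=1/(c_1+c_2)$, whence
$$q_1=\frac{c_2}{(c_1+c_2)^2},\qquad q_2=\frac{c_1}{(c_1+c_2)^2}$$
is the unique positive solution, settling existence and uniqueness.

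Next I would evaluate the Jacobian $J_{GB}$ at this point. Direct differentiation, followed by substitution of the equilibrium coordinates and of $\sqrt{c_2q_1}=c_2/(c_1+c_2)$, yields $\partial G_1/\partial q_1=-2c_1c_2/(c_1+c_2)$, $\partial G_1/\partial q_2=c_2(c_2-c_1)/(c_1+c_2)$, and ${\rm d}R_2/{\rm d}q_1=(c_1-c_2)/(2c_2)$. Since the lower-right entry of $J_{GB}$ is zero, this gives the compact expressions
$$\Tr(J_{GB})=1-\frac{2c_1c_2K}{c_1+c_2},\qquad \Det(J_{GB})=\frac{K(c_1-c_2)^2}{2(c_1+c_2)}.$$

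Finally I would impose the three conditions of \eqref{eq:semi-lb}. Clearing the positive denominator $2(c_1+c_2)$, the condition $1+\Tr(J_{GB})+\Det(J_{GB})>0$ becomes $c_1^2K-6c_1c_2K+c_2^2K+4c_1+4c_2>0$, i.e.\ $R_{GB}^2>0$; the condition $1-\Det(J_{GB})>0$ becomes $c_1^2K-2c_1c_2K+c_2^2K-2c_1-2c_2<0$, i.e.\ $R_{GB}^3<0$. The key observation, and the only place where anything non-mechanical happens, is that the remaining condition $1-\Tr(J_{GB})+\Det(J_{GB})>0$ collapses to $K(c_1+c_2)/2>0$, which holds automatically because $K,c_1,c_2>0$. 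Thus local stability is governed precisely by the two stated inequalities. The main obstacle is therefore not conceptual but computational: correctly differentiating $G_1$ and $R_2$ and simplifying at the equilibrium without sign or algebra slips, since an error there would propagate into the trace and determinant and corrupt the final polynomials.
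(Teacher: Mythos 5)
Your computation is correct, and it is worth noting that the paper itself offers no proof of this proposition at all: it simply restates the result from the cited work of Tramontana (2010), and the paper's general methodology elsewhere (for the quadratic-cost theorems) is to feed the semi-algebraic system \eqref{eq:semi-lb} into a symbolic solver via border polynomials and sample points rather than to compute by hand. Your route is therefore genuinely different and more informative: you obtain the closed-form equilibrium $q_1=c_2/(c_1+c_2)^2$, $q_2=c_1/(c_1+c_2)^2$, the exact equilibrium values $\Tr(J_{GB})=1-2c_1c_2K/(c_1+c_2)$ and $\Det(J_{GB})=K(c_1-c_2)^2/\bigl(2(c_1+c_2)\bigr)$ (I verified all three partial derivatives and both Jury reductions; they are right), and the structural observation that the flip-type condition $1-\Tr+\Det>0$ collapses to $K(c_1+c_2)/2>0$ via the identity $4c_1c_2+(c_1-c_2)^2=(c_1+c_2)^2$, which explains \emph{why} only two of the three Jury conditions appear in the statement --- something the black-box symbolic approach certifies but does not illuminate. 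Two small points: you silently and correctly repaired the paper's typo (the second polynomial is labeled $R_{GB}^2$ in the statement but is plainly meant to be $R_{GB}^3$, and your $1-\Det>0$ reduction confirms this); and for completeness you should remark that uniqueness holds because any fixed point with $q_1>0$ forces $\partial\Pi_1/\partial q_1=0$, so the two relations $q_2=c_1(q_1+q_2)^2$ and $q_1=c_2(q_1+q_2)^2$ exhaust the interior fixed points --- a one-line remark, since your addition trick already pins down $q_1+q_2$ uniquely.
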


%$k>4(c_1+c_2)/(c_1^2-6c_1c_2+c_2^2)$
%
%$k<2(c_1+c_2)/(c_1-c_2)^2$

\begin{figure}[htbp]
  \centering
    \subfigure[]{\includegraphics[width=0.4\textwidth]{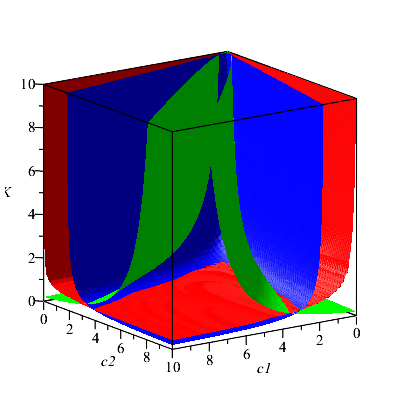}} 
    \subfigure[]{\includegraphics[width=0.4\textwidth]{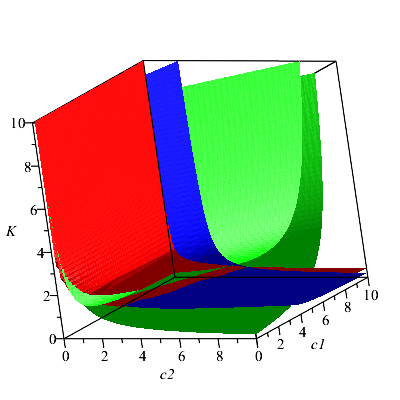}} \\
    \subfigure[$c_1=c_2$.]{\includegraphics[width=0.4\textwidth]{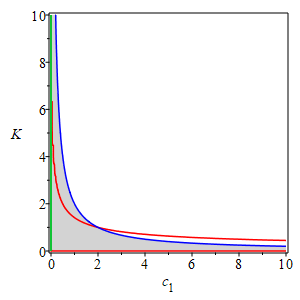}} 
    \subfigure[$K=1$.]{\includegraphics[width=0.4\textwidth]{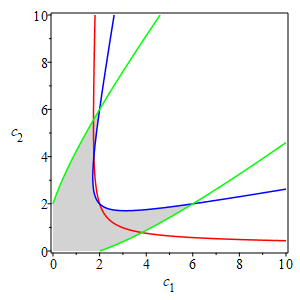}} \\
  \caption{The 3-dimensional $(c_1,c_2,K)$ parameter space of Model GB. The red surface is $R_{GB}^1=0$, the blue surface is $R_{GB}^2=0$, and the green surface is $R_{GB}^3=0$.}
    \label{fig:par-gb}
\end{figure}

The following result is similar to the paper by Fisher \cite{Fisher1961T}.

\begin{proposition}
	For Model GB, if $c_1>13$ or $c_2>7$, the stable region for the linear costs $C_i(q_i)=c_iq_i$ is strictly contained in that for the quadratic costs $C_i(q_i)=c_iq_i^2$.
\end{proposition}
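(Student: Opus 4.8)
The plan is to reduce the set-containment statement to a one-variable comparison of the critical values of $K$ at which stability is lost, and then to certify the resulting polynomial inequality in $(c_1,c_2)$ by the real root classification machinery of \cite{Li2014C}. Throughout, I would work inside the open octant $c_1,c_2,K>0$, write $L=\{R_{GB}^2>0,\ R_{GB}^3<0\}$ for the linear stable region, $Q=\{R_{GB}^1<0\}$ for the quadratic one, and $\mathcal{C}=\{c_1>13\}\cup\{c_2>7\}$ for the constraint region; the goal is then $L\cap\mathcal{C}\subsetneq Q$.

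First I would record that $R_{GB}^1$ is even in $K$, so with $u=K^2$ it becomes the quadratic $f(u)=Au^2+Bu+C$, where
$$A=4c_1^3c_2\,(c_1^2-34c_1c_2+c_2^2)^2,\qquad C=-(c_1-c_2)^4(9c_1-c_2)^2,$$
and $B=8c_1^2c_2(33c_1^4-308c_1^3c_2-762c_1^2c_2^2+12c_1c_2^3+c_2^4)$. The point of these factorizations is that $A\ge 0$ and $C\le 0$ on the whole octant. Hence $f(0)=C\le 0$ while $f(u)\to+\infty$, so in the generic case $A>0$ the product of the roots of $f$ is $C/A\le 0$ and $f$ has a unique nonnegative root $u_+$; therefore, for each fixed $(c_1,c_2)$, $Q$ is exactly the interval $K\in(0,\sqrt{u_+})$. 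The degenerate loci $A=0$ (the line $c_1^2-34c_1c_2+c_2^2=0$) and $C=0$ (the lines $c_1=c_2$ and $c_2=9c_1$) must be handled separately, but on each of them $B<0$, so $Q$ is again a single interval $(0,K_{quad})$. Likewise $L$ is an interval $K\in(0,K_{lin})$ — small $K$ is always linearly stable — with $K_{lin}=\min\{K_2,K_3\}$, where $K_3=2(c_1+c_2)/(c_1-c_2)^2$ comes from $R_{GB}^3<0$ and $K_2=4(c_1+c_2)/(6c_1c_2-c_1^2-c_2^2)$ from $R_{GB}^2>0$ (with $K_2=+\infty$ when $c_1^2-6c_1c_2+c_2^2\ge 0$).

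With both regions realized as $K$-intervals, containment becomes the single scalar inequality $K_{lin}\le K_{quad}$, equivalently $f(u_{lin})\le 0$ with $u_{lin}=K_{lin}^2$; this uses convexity of $f$, $f(0)\le 0$, and $u_{lin}>0$, so that $f(u_{lin})\le 0$ iff $u_{lin}\le u_+$. I would then split according to which linear constraint binds. When $K_{lin}=K_3$, I substitute $u_{lin}=4(c_1+c_2)^2/(c_1-c_2)^4$ into $f$ and clear the positive factor $(c_1-c_2)^8$; when $K_{lin}=K_2$, I substitute the analogous value. Each case produces a homogeneous polynomial inequality in $(c_1,c_2)$, and the claim is precisely that these hold on $\mathcal{C}$. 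This is a sign-definiteness question on a semialgebraic set, so I would feed it to the border-polynomial and real-root-classification procedure used for the models above, choosing sample points across the cones $\{c_1>13\}$ and $\{c_2>7\}$; the thresholds $13$ and $7$ should emerge as the least constants for which the substituted polynomials remain nonpositive uniformly over all ratios $c_1:c_2$.

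Finally, strictness is cheap. On the diagonal $c_1=c_2=c$ one has $A=4096\,c^8$, $B=-8192\,c^7$, $C=0$, so $K_{quad}=\sqrt{2/c}$, while $R_{GB}^3<0$ holds identically and $R_{GB}^2>0$ gives $K_{lin}=2/c$; for any $c>7$ one gets $K_{lin}=2/c<\sqrt{2/c}=K_{quad}$, and any $K$ strictly between them yields a point of $Q\cap\mathcal{C}$ lying outside $L$. The main obstacle is the third paragraph: substituting $u_{lin}$ into the degree-eight $R_{GB}^1$ produces bulky polynomials, and proving their nonpositivity on the two cones — while correctly stitching together the two binding-constraint cases and the degenerate loci $A=0$ and $C=0$ — is the genuinely laborious step, exactly the sort of computation the author defers as ``tedious'' in the companion Proposition for Model GR.
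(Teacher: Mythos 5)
The paper itself offers no proof of this proposition (it is explicitly ``left to the readers''), so your outline must be judged on its own merits, and its skeleton is sound: several nontrivial ingredients check out exactly. $R_{GB}^1$ is indeed even in $K$, and with $u=K^2$ the coefficients factor as you claim, $A=4c_1^3c_2(c_1^2-34c_1c_2+c_2^2)^2\ge 0$ and $C=-(c_1-c_2)^4(9c_1-c_2)^2\le 0$ (both identities verify by expansion); both stability regions are fiberwise initial $K$-intervals; the reduction of containment to the single condition $f(u_{lin})\le 0$ via convexity of $f$ is valid; and the diagonal witness does give strictness, since for $c_1=c_2=c>7$ one has $K_{lin}=2/c<\sqrt{2/c}=K_{quad}$. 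One concrete error in the case analysis: on the locus $A=0$, i.e.\ $c_2=(17\pm 12\sqrt{2})c_1$, your claim that $B<0$ is false. Writing $B=8c_1^6c_2\,g(t)$ with $t=c_2/c_1$ and $g(t)=t^4+12t^3-762t^2-308t+33$, one checks $g(17+12\sqrt{2})>0$ and $g(17-12\sqrt{2})>0$, so $B>0$ on both branches; note also that $B<0$ together with $A=0$, $C<0$ would make $f$ negative for \emph{all} $u>0$, contradicting your own description of $Q$ as a bounded interval there. The gap is repairable ($C<0$ on those branches, so the linear $f$ still has a single positive root and $Q$ remains an initial interval), but as written the degenerate-locus argument is wrong.

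The more serious flaw is in the third paragraph, where the actual content of the proposition lives. Your claim that substituting $u_{lin}$ and clearing denominators produces \emph{homogeneous} inequalities in $(c_1,c_2)$, to be certified ``over all ratios $c_1{:}c_2$,'' cannot be right and would sink the proof if pursued: $f$ is quasi-homogeneous of degree $6$ only when $u$ is given weight $-1$ (then $Au^2$, $Bu$, $C$ all have degree $6$), i.e.\ $K_{quad}$ scales like $c^{-1/2}$, whereas $R_{GB}^2$ and $R_{GB}^3$ are homogeneous with $K$ of weight $-1$, so $K_{lin}$ scales like $c^{-1}$. Consequently $f(u_{lin})$ is genuinely inhomogeneous, and the containment depends on the absolute scale of $(c_1,c_2)$, not just the ratio --- this dimensional mismatch between the two cost specifications is precisely why absolute thresholds such as $13$ and $7$ can occur at all; under your homogeneity reduction the statement would be scale-invariant and no thresholds could ``emerge as least constants.'' Your own diagonal computation (containment iff $c\ge 2$) already exhibits the scale dependence and contradicts the homogeneity claim. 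The correct final step is a two-variable semialgebraic containment: show that the inhomogeneous polynomials obtained from $f(K_3^2)\le 0$, and from $f(K_2^2)\le 0$ on the locus $6c_1c_2>c_1^2+c_2^2$ where $K_2$ binds, are nonpositive on $\{c_1>13\}\cup\{c_2>7\}$, e.g.\ by real-root classification or CAD in $(c_1,c_2)$ --- still mechanical, but not the one-parameter ratio sweep you describe. Finally, like the paper, you defer this computation rather than execute it (and the thresholds need only be \emph{sufficient}, not least), so even after repairing the homogeneity error the proposal remains a plan rather than a proof.
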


\begin{proof}
The proof is tedious and we leave it to the readers.
\end{proof}

\subsection{Model GL}

The Jacobian matrix is
\begin{equation}
J_{GL}=\left[
	\begin{matrix}
	1+K\cdot \partial G_1/\partial q_1 & K\cdot \partial G_1/\partial q_2\\
	\partial S_2/\partial q_1 & \partial S_2/\partial q_2
	\end{matrix}
\right]
\end{equation}

Hence, the stable equilibria can be described by 
\begin{equation}\label{eq:semi-gl}
	\left\{\begin{split}
		&K\cdot G_1(q_1,q_2)=0,\\
		&S_2(q_2,q_1)=0,\\
		&q_1>0,~q_2>0,\\
		&1+\Tr(J_{GL})+\Det(J_{GL})>0,\\
		&1-\Tr(J_{GL})+\Det(J_{GL})>0,\\
		&1-\Det(J_{GL})>0,\\
		&c_1>0,~c_2>0,~K>0.
	\end{split}
	\right.
\end{equation}

% ca*cb*k*(ca-cb)*(ca+cb)*(-1/4*cb+ca)*(ca^7*k^4-13/2*ca^6*cb*k^4+177/16*ca^5*cb^2*k^4-13/8*ca^4*cb^3*k^4+1/16*ca^3*cb^4*k^4+3/4*ca^5*cb*k^2+29/8*ca^4*cb^2*k^2-23/2*ca^3*
%cb^3*k^2-7/8*ca^2*cb^4*k^2-81/64*cb*ca^4+45/16*ca^3*cb^2-59/32*ca^2*cb^3+5/16*ca*cb^4-1/64*cb^5)*(ca^7*cb*k^4-13/2*ca^6*cb^2*k^4+177/16*ca^5*cb^3*k^4-13/8*ca^4*cb^4*k^4+1/16*ca^3*
%cb^5*k^4-2*ca^6*cb*k^2+85/4*ca^5*cb^2*k^2-267/8*ca^4*cb^3*k^2-17*ca^3*cb^4*k^2-7/8*ca^2*cb^5*k^2-4*ca^6+17/2*cb*ca^5-353/64*ca^4*cb^2+25/16*ca^3*cb^3-19/32*ca^2*cb^4+1/16*ca*cb^5-\
%1/64*cb^6)*(ca^7*cb*k^4-21/2*ca^6*cb^2*k^4+449/16*ca^5*cb^3*k^4-21/8*ca^4*cb^4*k^4+1/16*ca^3*cb^5*k^4+6*ca^6*cb*k^2-25/4*ca^5*cb^2*k^2-267/8*ca^4*cb^3*k^2+3/2*ca^3*cb^4*k^2+1/8*ca
%^2*cb^5*k^2-4*ca^6+17/2*cb*ca^5-353/64*ca^4*cb^2+25/16*ca^3*cb^3-19/32*ca^2*cb^4+1/16*ca*cb^5-1/64*cb^6)
%
%`all parametric points verified`, 3.495, [[[1, 1/8, 1/2], 1], [[1, 1/8, 1], 1], [[1, 7/32, 1/2], 1], [[1, 7/32, 1], 1], [[1, 17/64, 1/2], 1], [[1, 17/64, 1], 1], [[1, 9/32, 1/2],
%1], [[1, 9/32, 1], 1], [[1, 1/2, 1], 1], [[1, 1/2, 3/2], 1], [[1, 2, 1], 1], [[1, 5, 1], 1], [[1, 10, 1], 1], [[1, 13, 1], 1], [[1, 21, 1], 1]], [[[1, 1/8, 1/2], 1], [[1, 1/8, 1],
%1], [[1, 7/32, 1/2], 1], [[1, 7/32, 1], 1], [[1, 17/64, 1/2], 1], [[1, 17/64, 1], 1], [[1, 9/32, 1/2], 1], [[1, 9/32, 1], 1], [[1, 1/2, 1], 1], [[1, 1/2, 3/2], 1], [[1, 2, 1], 1],
%[[1, 5, 1], 1], [[1, 10, 1], 1], [[1, 13, 1], 1], [[1, 21, 1], 1]]

\begin{theorem}
For Model GL, there exists a unique equilibrium with $q_1,q_2>0$. Moreover, this equilibrium is locally stable if $R_{GL}<0$,
where 
\begin{align*}
	R_{GL}^1=\,&
64\,c_1^7c_2K^4-672\,c_1^6c_2^2K^4+1796\,c_1^5c_2^3K^4-168\,c_1^4c_2^4K^4+4\,c_1^3c_2^5K^4+384\,c_1^6c_2K^2\\
&-400\,c_1^5c_2^2K^2-2136\,c_1^4c_2^3K^2+96\,c_1^3c_2^4K^2+8\,c_1^2c_2^5K^2-256\,c_1^6+544\,c_1^5c_2\\
&-353\,c_1^4c_2^2+100\,c_1^3c_2^3-38\,c_1^2c_2^4+4\,c_1c_2^5-c_2^6.
\end{align*}
%$$R_{GL}^1=
%64*ca^7*cb*k^4-672*ca^6*cb^2*k^4+1796*ca^5*cb^3*k^4-168*ca^4*cb^4*k^4+4*ca^3*cb^5*k^4+384*ca^6*cb*k^2-400*ca^5*cb^2*k^2-2136*ca^4*cb^3*k^2+96*ca^3*cb^4*k^2+8*ca^2*cb^5*k^2-256*ca^6+544*ca^5*cb-353*ca^4*cb^2+100*ca^3*cb^3-38*ca^2*cb^4+4*ca*cb^5-cb^6.
%$$
\end{theorem}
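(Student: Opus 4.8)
The plan is to follow the same symbolic pipeline already applied to Models GR and GB, now run on the semi-algebraic system \eqref{eq:semi-gl}. First I would settle the equilibrium part of the statement. At a fixed point of $M_{GL}$ we need $q_1(t+1)=q_1(t)$ and $q_2(t+1)=q_2(t)$; since $K>0$, the first coordinate forces $G_1(q_1,q_2)=0$, i.e. $\partial\Pi_1/\partial q_1=0$, while the second coordinate forces $q_2=S_2(q_1,q_2)$. For $q_1>0$ the condition $G_1=0$ is equivalent to $F_1(q_1,q_2)=0$, and at a fixed point the LMA expectation is exact so that $q_2=S_2(q_1,q_2)$ reduces to firm $2$'s true first-order condition $F_2(q_2,q_1)=0$. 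Hence every positive fixed point is a Nash equilibrium; substituting the closed form $E$ verifies that it solves the system, and the uniqueness of the positive solution of the cubic first-order condition (already noted below \eqref{eq:rational-cd-r}) excludes any other positive fixed point. This establishes existence and uniqueness of the positive equilibrium.

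Next I would linearize. I would evaluate the entries of $J_{GL}$ at $E$, obtaining $\Tr(J_{GL})$ and $\Det(J_{GL})$ as explicit functions of $(c_1,c_2,K)$, and then assemble the three Jury conditions $1+\Tr(J_{GL})+\Det(J_{GL})>0$, $1-\Tr(J_{GL})+\Det(J_{GL})>0$ and $1-\Det(J_{GL})>0$. Because $E$ contains the nested radicals $\sqrt{c_1}$, $\sqrt{c_2}$ and $\sqrt{2\sqrt{c_1c_2}}$, the practical step here is to clear radicals---for instance by setting $a=\sqrt{c_1}$, $b=\sqrt{c_2}$ and rationalizing---so that each Jury expression becomes a genuine polynomial inequality in the parameters. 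This reduces the stability test to a purely polynomial semi-algebraic system over $\{c_1>0,\, c_2>0,\, K>0\}$.

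I would then apply the border-polynomial method of \cite{Li2014C} exactly as in the GR and GB cases: compute the border polynomial of \eqref{eq:semi-gl}, take its squarefree part $SP^*_{GL}$ (the analogue of $SP^*_{GR}$), and factor it. The real variety of $SP^*_{GL}$ decomposes the open positive octant of parameter space into finitely many connected cells on each of which the number of real solutions of the system---and hence the stability status of $E$---is constant. Picking one sample point in each cell and deciding the three Jury inequalities there by exact real-root counting determines stability cell by cell. Collecting the stable cells, I expect exactly one factor of $SP^*_{GL}$ to carry the genuine stability boundary; clearing its denominators yields the stated polynomial $R_{GL}^1$, and the stable region is characterized by $R_{GL}^1<0$.

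The main obstacle will be the algebraic bulk of the middle and final steps rather than any conceptual difficulty. Substituting the radical-laden equilibrium into $J_{GL}$ and reducing the Jury conditions to polynomials is error-prone, and the border polynomial is a high-degree multivariate object (the printed $R_{GL}^1$ already has total degree $12$ in the parameters) whose factorization must be carried out and then correctly matched against the sample-point data to single out $R_{GL}^1$ from the spurious factors---such as $c_1\pm c_2$ or discriminant factors---that do not bound the stable region. Verifying that the chosen sample points meet every cell, so that no stable or unstable region is overlooked, is the step that most needs care.
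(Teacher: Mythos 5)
Your proposal is correct and follows essentially the same route as the paper: the paper establishes this theorem by the identical pipeline it displays explicitly for Model GR---forming the semi-algebraic system \eqref{eq:semi-gl} from the equilibrium equations and Jury conditions, computing the squarefree part of the border polynomial, and deciding the real-solution count (hence stability) at sample points in each cell, with $R_{GL}^1$ emerging as the relevant factor of the border polynomial. Your preliminary reduction of the fixed-point conditions ($G_1=0$ with $q_1>0$ giving $F_1=0$, and $q_2=S_2(q_1,q_2)$ collapsing to $F_2(q_2,q_1)=0$) is the same reduction the paper uses implicitly when it asserts the unique Nash equilibrium $E$, so no gap remains.
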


%ca*cb*k*(ca+cb)*(ca*k-1/3*cb*k+2/3)*(ca*cb*k-1/7*cb^2*k-8/7*ca-4/7*cb)*(ca*cb*k-1/3*cb^2*k-4/3*ca-2/3*cb)
%
%[[[1, 1, 1], 1], [[1, 7/2, 1], 1], [[1, 5, 1/2], 1], [[1, 8, 1/4], 1]], [[[1, 1, 1], 1], [[1, 7/2, 1], 1], [[1, 5, 1/2], 1], [[1, 8, 1/4],
%1]]

The linear case was first studied in \cite{Cavalli2015N}, which is restated here.

\begin{proposition}
For Model GL, if $C_1(q_1)=c_1q_1$ and $C_2(q_2)=c_2q_2$, there exists a unique equilibrium with $q_1,q_2>0$. Moreover, this equilibrium is locally stable if $R_{GL}^2>0$ and $R_{GL}^3<0$, where
\begin{align*}
	R_{GL}^2 =&\, 3\,c_1K-c_2K+2,\\
	R_{GL}^3 =&\, 7\,c_1c_2K-c_2^2K-8\,c_1-4\,c_2.
\end{align*}
%\begin{align*}
%	R_{GL}^2 =& 3*ca*k-cb*k+2,\\
%	R_{GL}^3 =& 7*ca*cb*k-cb^2*k-8*ca-4*cb.
%\end{align*}
\end{proposition}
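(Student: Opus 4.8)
The plan is to proceed by a direct Jacobian analysis, which is feasible here because the linear-cost equilibrium, unlike its quadratic-cost counterpart, admits a simple closed form (so the full border-polynomial machinery of \cite{Li2014C} is not needed). The first task is to record the correct iteration maps for linear costs, since the formulas for $G_1$ and $S_2$ given in the model section are derived under the quadratic specification $C_i(q_i)=c_iq_i^2$. Redoing the marginal-profit and first-order-condition computations with $C_i(q_i)=c_iq_i$ yields $G_1(q_1,q_2)=q_1q_2/(q_1+q_2)^2-c_1q_1$ and $S_2(q_1,q_2)=q_2+q_1/2-c_2(q_1+q_2)^2/2$.

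For existence and uniqueness I would impose the fixed-point conditions. Since $K>0$ and $q_1>0$, the equation $K\,G_1=0$ forces $q_2/(q_1+q_2)^2=c_1$, while the fixed-point equation $q_2=S_2(q_1,q_2)$ forces $q_1/(q_1+q_2)^2=c_2$. Dividing these gives $q_2/q_1=c_1/c_2$, and back-substitution produces the unique positive solution $E=[\,c_2/(c_1+c_2)^2,\;c_1/(c_1+c_2)^2\,]$; positivity and uniqueness are then immediate from $c_1,c_2>0$.

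The stability step is to evaluate $J_{GL}$ at $E$ and apply the three stability-triangle (Jury) inequalities listed in system \eqref{eq:semi-gl}. Introducing $s=q_1+q_2=1/(c_1+c_2)$ simplifies the partial derivatives considerably, and I expect the entries to collapse to rational functions of $c_1,c_2,K$ with denominator $c_1+c_2$. After computing $\Tr(J_{GL})$ and $\Det(J_{GL})$, the key algebraic observation I anticipate is that the conditions factor cleanly: $1-\Det(J_{GL})>0$ reduces, after dividing out the positive factor $c_2/(2(c_1+c_2))$, to $R_{GL}^2>0$; while $1+\Tr(J_{GL})+\Det(J_{GL})>0$ reduces, up to the positive factor $1/(2(c_1+c_2))$, to $-R_{GL}^3>0$, i.e.\ $R_{GL}^3<0$.

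The main obstacle — and the point that explains why only two inequalities appear in the statement rather than three — is the middle condition $1-\Tr(J_{GL})+\Det(J_{GL})>0$. I expect its numerator to simplify to $c_2K(c_1+c_2)$, so that the whole expression equals $c_2K/2$, which is strictly positive for all admissible parameters. Verifying this cancellation carefully (and thereby showing the middle Jury condition is automatically satisfied) is the one delicate step; once it is done, the stability region is characterized exactly by $R_{GL}^2>0$ and $R_{GL}^3<0$, completing the proof.
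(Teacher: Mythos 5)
Your proof is correct, but it takes a genuinely different route from the paper: the paper offers no derivation at all for this proposition --- it explicitly restates the linear-cost result from \cite{Cavalli2015N} --- and its own methodology for the quadratic-cost theorems is the symbolic border-polynomial and sample-point machinery of \cite{Li2014C}, not a hand computation. Your direct argument checks out in every detail. With linear costs the maps are indeed $G_1=q_1q_2/(q_1+q_2)^2-c_1q_1$ and $q_2(t+1)=\bigl(2q_2+q_1-c_2(q_1+q_2)^2\bigr)/2$; the fixed-point conditions force $q_2=c_1s^2$ and $q_1=c_2s^2$ with $s=q_1+q_2$, so $s=1/(c_1+c_2)$ and the unique positive equilibrium is $E=\bigl[c_2/(c_1+c_2)^2,\ c_1/(c_1+c_2)^2\bigr]$, as you state. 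At $E$ the Jacobian entries evaluate to $1-2Kc_1c_2/(c_1+c_2)$, $Kc_2(c_2-c_1)/(c_1+c_2)$, $(c_1-c_2)/\bigl(2(c_1+c_2)\bigr)$ and $c_1/(c_1+c_2)$, and the three Jury conditions collapse exactly as you anticipated: $1-\Det(J_{GL})=\frac{c_2}{2(c_1+c_2)}\,(3c_1K-c_2K+2)$, giving $R_{GL}^2>0$; $1+\Tr(J_{GL})+\Det(J_{GL})=\frac{-R_{GL}^3}{2(c_1+c_2)}$, giving $R_{GL}^3<0$ (the key factorization here is $4c_1(2c_1+c_2)-(c_1-c_2)^2=(7c_1-c_2)(c_1+c_2)$); and $1-\Tr(J_{GL})+\Det(J_{GL})=\frac{Kc_2}{2}>0$ identically, via $4c_1c_2+(c_1-c_2)^2=(c_1+c_2)^2$, which confirms your claim that the middle condition is vacuous and explains why only two inequalities appear in the statement. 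What your approach buys is a self-contained, fully verifiable elementary proof, possible precisely because the linear-cost equilibrium admits a closed form (the quadratic-cost equilibrium satisfies a cubic, which is why the paper resorts to symbolic elimination there); what it gives up is uniformity --- the paper's computational framework treats all five models by the same algorithmic pipeline, whereas your calculation is bespoke to Model GL with linear costs.
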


\begin{figure}[htbp]
  \centering
    \subfigure[]{\includegraphics[width=0.4\textwidth]{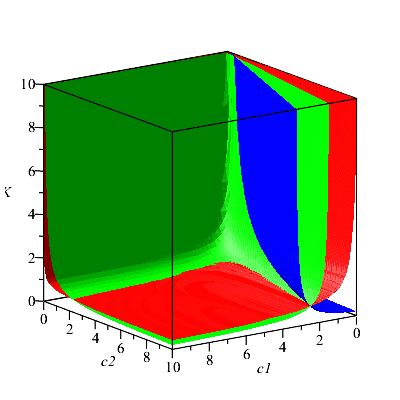}} 
    \subfigure[]{\includegraphics[width=0.4\textwidth]{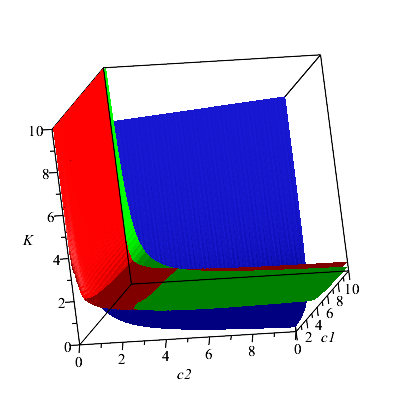}} \\
    
    \subfigure[$c_1=c_2$.]{\includegraphics[width=0.4\textwidth]{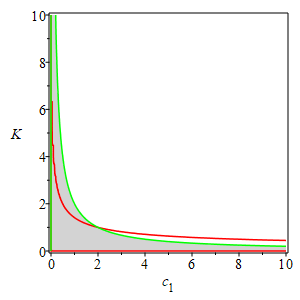}} 
    \subfigure[$K=1$.]{\includegraphics[width=0.4\textwidth]{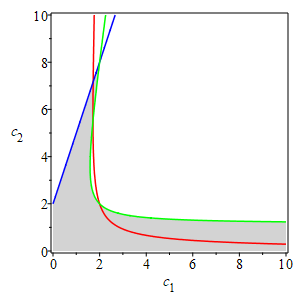}} \\
    
  \caption{The 3-dimensional $(c_1,c_2,K)$ parameter space of Model GL. The red surface is $R_{GL}^1=0$, the blue surface is $R_{GL}^2=0$, and the green surface is $R_{GL}^3=0$.}
    \label{fig:par-gl}
\end{figure}

\begin{proposition}
	For Model GL, even if $c_1>10^{100}$ and $c_2>10^{100}$, the stable region for the linear costs $C_i(q_i)=c_iq_i$ is not strictly contained in that for the quadratic costs $C_i(q_i)=c_iq_i^2$.
\end{proposition}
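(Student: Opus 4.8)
The plan is to refute strict containment by producing an explicit witness: a parameter triple $(c_1,c_2,K)$ with $c_1,c_2>10^{100}$ that satisfies the linear-cost stability conditions $R_{GL}^2>0$ and $R_{GL}^3<0$ yet fails the quadratic-cost condition, i.e. $R_{GL}^1\ge 0$. Any such point lies in the linear stable region but outside the quadratic one, so the former cannot be contained in the latter. The heuristic that locates the witness is to work in the strongly asymmetric regime $c_2\ll c_1$ (the \textsf{LMA} firm far cheaper than the gradient firm) and to compare, for fixed costs, the admissible intervals of $K$ of the two models.

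First I would pin down the linear stable interval. When $c_2<3c_1$ the inequality $R_{GL}^2>0$ holds automatically for $K>0$, so the only active constraint is $R_{GL}^3<0$, which gives the upper threshold $K_L=(8c_1+4c_2)/\bigl(c_2(7c_1-c_2)\bigr)$, behaving like $8/(7c_2)$ when $c_2\ll c_1$. Next I would read $R_{GL}^1$ as a quadratic in $K^2$. Its value at $K=0$ factors exactly as $R_{GL}^1\big|_{K=0}=-(c_1-c_2)^2\,(c_2^2-c_1c_2+16c_1^2)^2\le 0$, so $K=0$ always lies in the quadratic stable set; and the coefficient of $K^4$ equals $4c_1^7c_2\,\alpha(c_2/c_1)$ with $\alpha(r)=16-168r+449r^2-42r^3+r^4$ and $\alpha(0)=16>0$. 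Hence for small $c_2/c_1$ the quadratic stable set is a single interval $(0,K_Q)$, where $K_Q$ is the square root of the unique positive root of $R_{GL}^1=0$ in $K^2$. Substituting $r=c_2/c_1$ and $w=c_1K^2$ renders $R_{GL}^1$ homogeneous, and as $r\to 0$ its positive root satisfies $w_Q\sim 2/\sqrt r$, so $K_Q\sim\sqrt2\,(c_1c_2)^{-1/4}$.

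Comparing the thresholds, $K_L>K_Q$ reduces asymptotically to $c_1\gtrsim\tfrac{9604}{4096}\,c_2^3$; whenever this holds the gap $(K_Q,K_L)$ is nonempty and each of its points is linearly stable but quadratically unstable. It then suffices to fix one honest point in this regime, for instance $c_2=10^{100}$, $c_1=10^{400}$, and $K=10^{-110}$, and to certify the three signs by exact arithmetic. Since every input is a power of ten, each monomial of $R_{GL}^1,R_{GL}^2,R_{GL}^3$ is an exact power of ten, and one checks that the single term $64\,c_1^7c_2K^4$, of order $10^{2460}$, strictly dominates the entire remainder of $R_{GL}^1$, forcing $R_{GL}^1>0$; likewise $R_{GL}^3$ is dominated by $-8c_1<0$ and $R_{GL}^2$ by $3c_1K>0$. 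These exact sign determinations certify non-containment.

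I expect the quadratic side to be the main obstacle. One must (i) justify that near this extreme corner of parameter space the quadratic stable region is genuinely the set cut out by $R_{GL}^1<0$, so that $R_{GL}^1>0$ truly forces instability rather than merely failing a sufficient condition; if the theorem's characterization is not already an equivalence there, this can be secured by evaluating the Jury inequalities $1\pm\Tr(J_{GL})+\Det(J_{GL})$ and $1-\Det(J_{GL})$ directly at the witness. One must also (ii) control the sign of the $K^4$-coefficient and the small-$r$ behaviour of the positive root $w_Q$. The cleanest rigorous route avoids all asymptotics: once the heuristic has singled out a suitable rational power-of-ten point, the proof collapses to a finite exact check of $R_{GL}^1>0$, $R_{GL}^2>0$, $R_{GL}^3<0$, together with the positivity of the equilibrium already guaranteed by the theorem.
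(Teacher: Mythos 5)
The paper provides no proof of this proposition whatsoever (unlike its neighbours, it does not even carry the ``left to the readers'' remark), so there is no argument of record to compare against; your explicit-witness strategy is the natural way to settle it, and your supporting computations are correct. I checked the factorization $R_{GL}^1\big|_{K=0}=-(c_1-c_2)^2\left(16c_1^2-c_1c_2+c_2^2\right)^2\le 0$, the $K^4$-coefficient $4c_1^7c_2\,\alpha(c_2/c_1)$ with $\alpha(r)=16-168r+449r^2-42r^3+r^4$, the thresholds $K_L=(8c_1+4c_2)/\bigl(c_2(7c_1-c_2)\bigr)\sim 8/(7c_2)$ and $K_Q\sim\sqrt2\,(c_1c_2)^{-1/4}$, and the asymptotic comparison $K_L>K_Q\iff c_1\gtrsim(9604/4096)c_2^3$; at your witness the single term $64c_1^7c_2K^4$ indeed dwarfs the rest of $R_{GL}^1$ (it is of order $10^{2461}$ against a remainder bounded by roughly $10^{2403}$, led by $-256c_1^6$), while $R_{GL}^2>0$ and $R_{GL}^3<0$ hold by the dominances you name. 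Since a single point in the linear stable set outside the quadratic one refutes strict containment a fortiori, the architecture of the proof is sound.

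Two issues, one trivial and one substantive. The trivial one: your witness takes $c_2=10^{100}$, which violates the \emph{strict} inequality $c_2>10^{100}$ in the statement; replace it by, say, $c_2=10^{102}$, $c_1=10^{400}$, $K=10^{-110}$, for which the identical dominance checks go through ($64c_1^7c_2K^4\sim 10^{2463}$ versus a remainder of order $10^{2403}$, and $R_{GL}^3$ still dominated by $-8c_1$). The substantive one is exactly the point you flag as (i): the paper's theorem for Model GL asserts only that $R_{GL}^1<0$ is \emph{sufficient} for local stability, so $R_{GL}^1>0$ at the witness shows only that the point escapes the stated sufficient condition, not that it lies outside the quadratic stable region. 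Your proposed remedy --- evaluating the Jury quantities $1+\Tr(J_{GL})+\Det(J_{GL})$, $1-\Tr(J_{GL})+\Det(J_{GL})$ and $1-\Det(J_{GL})$ directly at the witness, using the closed-form equilibrium $E$ --- is the right repair and is feasible in exact arithmetic: with even powers of ten, $\sqrt{c_1}=10^{200}$ and $\sqrt{c_2}=10^{51}$ are exact, and the only irrationality entering the Jacobian, $\sqrt{2\sqrt{c_1c_2}}$, is confined to a quadratic extension of $\qnum$. Until that finite check is actually carried out (or until the paper's condition is upgraded to an equivalence via its border-polynomial construction), your argument has a conditional step in it; with the check, it is complete.
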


\subsection{Model GA}

The Jacobian matrix is
\begin{equation}
J_{GA}=\left[
	\begin{matrix}
	1+K\cdot\partial G_1/\partial q_1 & K\cdot\partial G_1/\partial q_2\\
	L\cdot {\rm d} R_2/{\rm d} q_1 & 1-L	\end{matrix}
\right]
\end{equation}

Hence, the stable equilibria can be described by 
\begin{equation}\label{eq:semi-gg}
	\left\{\begin{split}
		&K\cdot G_1(q_1,q_2)=0,\\
		&F_2(q_2,q_1)=0,\\
		&q_1>0,~q_2>0,\\
		&1+\Tr(J_{GA})+\Det(J_{GA})>0,\\
		&1-\Tr(J_{GA})+\Det(J_{GA})>0,\\
		&1-\Det(J_{GA})>0,\\
		&c_1>0,~c_2>0,~K>0,~L>0,~1-L>0.
	\end{split}
	\right.
\end{equation}

% ca*cb*ka*kb*(ca-cb)*(ca+cb)*(-1024*ca^3*cb^3*ka^4*kb^4+384*ca^3*cb^2*ka^4*kb^2+384*ca^3*cb^2*ka^3*kb^3+384*ca^2*cb^3*ka^3*kb^3+384*ca^2*cb^3*ka^2*kb^4+ca^4*ka^4-18*ca^3
%*cb*ka^4-32*ca^3*cb*ka^3*kb-18*ca^3*cb*ka^2*kb^2+81*ca^2*cb^2*ka^4+288*ca^2*cb^2*ka^3*kb+420*ca^2*cb^2*ka^2*kb^2+288*ca^2*cb^2*ka*kb^3+81*ca^2*cb^2*kb^4-18*ca*cb^3*ka^2*kb^2-32*ca
%*cb^3*ka*kb^3-18*ca*cb^3*kb^4+cb^4*kb^4)*(-64*ca^4*cb^4*ka^4*kb^4+96*ca^4*cb^3*ka^4*kb^2-32*ca^4*cb^3*ka^3*kb^3-32*ca^3*cb^4*ka^3*kb^3+96*ca^3*cb^4*ka^2*kb^4+ca^5*cb*ka^4-18*ca^4*
%cb^2*ka^4+32*ca^4*cb^2*ka^3*kb-18*ca^4*cb^2*ka^2*kb^2+81*ca^3*cb^3*ka^4+96*ca^3*cb^3*ka^3*kb-92*ca^3*cb^3*ka^2*kb^2+96*ca^3*cb^3*ka*kb^3+81*ca^3*cb^3*kb^4-18*ca^2*cb^4*ka^2*kb^2+
%32*ca^2*cb^4*ka*kb^3-18*ca^2*cb^4*kb^4+ca*cb^5*kb^4+8*ca^4*cb*ka^2-8*ca^4*cb*ka*kb-16*ca^3*cb^2*ka^2-120*ca^3*cb^2*ka*kb-120*ca^3*cb^2*kb^2-120*ca^2*cb^3*ka^2-120*ca^2*cb^3*ka*kb-\
%16*ca^2*cb^3*kb^2-8*ca*cb^4*ka*kb+8*ca*cb^4*kb^2-4*ca^4+16*ca^3*cb-24*ca^2*cb^2+16*ca*cb^3-4*cb^4)*(-1024*ca^4*cb^4*ka^4*kb^4+384*ca^4*cb^3*ka^4*kb^2-640*ca^4*cb^3*ka^3*kb^3-640*
%ca^3*cb^4*ka^3*kb^3+384*ca^3*cb^4*ka^2*kb^4+ca^5*cb*ka^4-18*ca^4*cb^2*ka^4+96*ca^4*cb^2*ka^3*kb-210*ca^4*cb^2*ka^2*kb^2+81*ca^3*cb^3*ka^4-96*ca^3*cb^3*ka^3*kb-732*ca^3*cb^3*ka^2*
%kb^2-96*ca^3*cb^3*ka*kb^3+81*ca^3*cb^3*kb^4-210*ca^2*cb^4*ka^2*kb^2+96*ca^2*cb^4*ka*kb^3-18*ca^2*cb^4*kb^4+ca*cb^5*kb^4+8*ca^4*cb*ka^2-40*ca^4*cb*ka*kb-16*ca^3*cb^2*ka^2-88*ca^3*
%cb^2*ka*kb-120*ca^3*cb^2*kb^2-120*ca^2*cb^3*ka^2-88*ca^2*cb^3*ka*kb-16*ca^2*cb^3*kb^2-40*ca*cb^4*ka*kb+8*ca*cb^4*kb^2-4*ca^4+16*ca^3*cb-24*ca^2*cb^2+16*ca*cb^3-4*cb^4)

\begin{theorem}
For Model GG, there exists a unique equilibrium with $q_1,q_2>0$. Moreover, this equilibrium is locally stable if $R_{GG}^1<0$,
where
\begin{align*} 
\begin{autobreak}
R_{GA}^1  = 
64\,c_1^{7} c_2 K^{4} L^{4} 
- 256\,c_1^{6} c_2^{2} K^{4} L^{4} 
+ 384\,c_1^{5} c_2^{3} K^{4} L^{4} 
- 256\,c_1^{4} c_2^{4} K^{4} L^{4} 
+ 64\,c_1^{3} c_2^{5} K^{4} L^{4} 
- 384\,c_1^{7} c_2 K^{4} L^{3} 
+ 2560\,c_1^{6} c_2^{2} K^{4} L^{3} 
- 4352\,c_1^{5} c_2^{3} K^{4} L^{3} 
+ 2560\,c_1^{4} c_2^{4} K^{4} L^{3} 
- 384\,c_1^{3} c_2^{5} K^{4} L^{3} 
+ 864\,c_1^{7} c_2 K^{4} L^{2} 
- 8576\,c_1^{6} c_2^{2} K^{4} L^{2} 
+ 19520\,c_1^{5} c_2^{3} K^{4} L^{2} 
- 8576\,c_1^{4} c_2^{4} K^{4} L^{2} 
+ 864\,c_1^{3} c_2^{5} K^{4} L^{2} 
- 864\,c_1^{7} c_2 K^{4} L 
+ 11904\,c_1^{6} c_2^{2} K^{4} L 
- 96\,c_1^{6} c_2 K^{2} L^{4} 
- 38464\,c_1^{5} c_2^{3} K^{4} L 
+ 384\,c_1^{5} c_2^{2} K^{2} L^{4} 
+ 11904\,c_1^{4} c_2^{4} K^{4} L 
- 576\,c_1^{4} c_2^{3} K^{2} L^{4} 
- 864\,c_1^{3} c_2^{5} K^{4} L 
+ 384\,c_1^{3} c_2^{4} K^{2} L^{4} 
- 96\,c_1^{2} c_2^{5} K^{2} L^{4} 
+ 324\,c_1^{7} c_2 K^{4} 
- 5904\,c_1^{6} c_2^{2} K^{4} 
+ 96\,c_1^{6} c_2 K^{2} L^{3} 
+ 27544\,c_1^{5} c_2^{3} K^{4} 
- 2944\,c_1^{5} c_2^{2} K^{2} L^{3} 
- 5904\,c_1^{4} c_2^{4} K^{4} 
+ 6208\,c_1^{4} c_2^{3} K^{2} L^{3} 
+ 324\,c_1^{3} c_2^{5} K^{4} 
- 3968\,c_1^{3} c_2^{4} K^{2} L^{3} 
+ 608\,c_1^{2} c_2^{5} K^{2} L^{3} 
+ 1416\,c_1^{6} c_2 K^{2} L^{2} 
+ 5728\,c_1^{5} c_2^{2} K^{2} L^{2} 
- 27344\,c_1^{4} c_2^{3} K^{2} L^{2} 
+ 13408\,c_1^{3} c_2^{4} K^{2} L^{2} 
- 1400\,c_1^{2} c_2^{5} K^{2} L^{2} 
- 3744\,c_1^{6} c_2 K^{2} L 
- 81\,c_1^{6} L^{4} 
+ 128\,c_1^{5} c_2^{2} K^{2} L 
+ 342\,c_1^{5} c_2 L^{4} 
+ 53312\,c_1^{4} c_2^{3} K^{2} L 
- 559\,c_1^{4} c_2^{2} L^{4} 
- 18304\,c_1^{3} c_2^{4} K^{2} L 
+ 436\,c_1^{3} c_2^{3} L^{4} 
+ 1376\,c_1^{2} c_2^{5} K^{2} L 
- 159\,c_1^{2} c_2^{4} L^{4} 
+ 22\,c_1 c_2^{5} L^{4} 
- c_2^{6} L^{4} 
+ 2592\,c_1^{6} c_2 K^{2} 
+ 648\,c_1^{6} L^{3} 
- 5760\,c_1^{5} c_2^{2} K^{2} 
- 2736\,c_1^{5} c_2 L^{3} 
- 37696\,c_1^{4} c_2^{3} K^{2} 
+ 4472\,c_1^{4} c_2^{2} L^{3} 
+ 8576\,c_1^{3} c_2^{4} K^{2} 
- 3488\,c_1^{3} c_2^{3} L^{3} 
- 480\,c_1^{2} c_2^{5} K^{2} 
+ 1272\,c_1^{2} c_2^{4} L^{3} 
- 176\,c_1 c_2^{5} L^{3} 
+ 8\,c_2^{6} L^{3} 
- 1944\,c_1^{6} L^{2} 
+ 8208\,c_1^{5} c_2 L^{2} 
- 13416\,c_1^{4} c_2^{2} L^{2} 
+ 10464\,c_1^{3} c_2^{3} L^{2} 
- 3816\,c_1^{2} c_2^{4} L^{2} 
+ 528\,c_1 c_2^{5} L^{2} 
- 24\,c_2^{6} L^{2} 
+ 2592\,c_1^{6} L 
- 10944\,c_1^{5} c_2 L 
+ 17888\,c_1^{4} c_2^{2} L 
- 13952\,c_1^{3} c_2^{3} L 
+ 5088\,c_1^{2} c_2^{4} L 
- 704\,c_1 c_2^{5} L 
+ 32\,c_2^{6} L 
- 1296\,c_1^{6} 
+ 5472\,c_1^{5} c_2 
- 8944\,c_1^{4} c_2^{2} 
+ 6976\,c_1^{3} c_2^{3} 
- 2544\,c_1^{2} c_2^{4} 
+ 352\,c_1 c_2^{5} 
- 16\,c_2^{6}.
\end{autobreak}
\end{align*}

\end{theorem}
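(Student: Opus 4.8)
The plan is to follow the same symbolic pipeline already applied to Models GR and GB, splitting the claim into the existence/uniqueness of a positive equilibrium and the characterization of its local stability by the single inequality $R_{GA}^1<0$. For the first part I would set the right-hand sides of the iteration map equal to $q_1$ and $q_2$. The second fixed-point equation $q_2=(1-L)q_2+LR_2(q_1)$ collapses (since $L\neq0$) to $q_2=R_2(q_1)$, equivalently $F_2(q_2,q_1)=0$, while the first gives $G_1(q_1,q_2)=0$; the parameter $L$ drops out entirely. These are exactly the Nash conditions already solved in Section~1, so the unique solution with $q_1,q_2>0$ is the point $E$ displayed there, and existence and uniqueness are inherited from its closed-form expression.

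For the stability part, $E$ is locally asymptotically stable precisely when both eigenvalues of $J_{GA}$ evaluated at $E$ lie strictly inside the unit disk. I would invoke the Jury (Schur--Cohn) criterion for the characteristic polynomial $\lambda^2-\Tr(J_{GA})\lambda+\Det(J_{GA})$, which is the triple of strict inequalities already recorded in \eqref{eq:semi-gg}: $1+\Tr(J_{GA})+\Det(J_{GA})>0$, $1-\Tr(J_{GA})+\Det(J_{GA})>0$, and $1-\Det(J_{GA})>0$. After substituting the equilibrium coordinates and clearing denominators, each becomes a polynomial inequality in $(c_1,c_2,K,L)$. Because $E$ is the \emph{unique} positive equilibrium, the augmented system \eqref{eq:semi-gg} admits a real solution if and only if $E$ itself is stable; the entire stability question thus reduces to counting the real solutions (zero or one) of \eqref{eq:semi-gg} as a function of the parameters over the admissible region $c_1,c_2,K>0$, $0<L<1$.

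To resolve this counting problem I would apply the real-solution-classification method of \cite{Li2014C}: compute the border polynomial of \eqref{eq:semi-gg} by eliminating $q_1,q_2$ from the equality constraints together with the boundary equations of each inequality (through resultants and \Groebner bases and the relevant discriminants), then take its squarefree part and factor it. The zero set of this border polynomial partitions the parameter region into finitely many open cells on each of which the number of stable equilibria is constant. Choosing one sample point per cell and isolating the real roots of \eqref{eq:semi-gg} there identifies which cells correspond to stability. Matching the resulting region against the factors of the border polynomial, I expect to single out $R_{GA}^1$ as the one factor whose sign flips across the stability boundary, exactly as $R_{GR}^1$ turned out to be $64$ times a factor of $SP^*_{GR}$ in the GR analysis; this yields the stated criterion $R_{GA}^1<0$.

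The main obstacle is computational rather than conceptual. Relative to Models GR and GB, the extra proportion parameter $L$ raises the parameter space to four dimensions and pushes the degrees in the elimination steps sharply upward, as reflected in the sheer size of $R_{GA}^1$ (degree $4$ in $L$ and up to degree $8$ in $c_1,c_2$, with several dozen terms). The delicate points are therefore twofold: first, carrying out the variable elimination and the discriminant/resultant computations without the intermediate expressions becoming intractable; and second, selecting enough correctly distributed sample points across \emph{all} cells cut out by the possibly many factors of the border polynomial, so that the single inequality $R_{GA}^1<0$ can be certified to coincide with stability throughout the admissible region rather than merely at the tested samples.
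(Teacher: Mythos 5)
Your proposal matches the paper's own method: the paper likewise reduces the fixed-point conditions to $G_1(q_1,q_2)=0$ and $F_2(q_2,q_1)=0$ with the unique positive Nash equilibrium $E$, encodes stability via the Jury conditions in the semi-algebraic system, and resolves it with the border-polynomial/sample-point classification of \cite{Li2014C}, exactly as carried out explicitly for Models GR and GB. No substantive difference; your proposal is correct and essentially the paper's argument, with the added (accurate) caveat that the fourth parameter $L$ only raises the computational, not conceptual, burden.
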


%ca*cb*k*l*(ca+cb)*(ca*k*l+cb*k*l-2*k-2*l)*(ca^2*cb*k*l+ca*cb^2*k*l-2*ca*cb*k-2*ca*cb*l+2*ca+2*cb)*(ca^2*cb*k*l+ca*cb^2*k*l-4*ca*cb*k-4*ca*cb*l+4*ca+4*cb)

%[[[1, 1/2, 1, 1], 1], [[1, 1/2, 3/2, 1], 1], [[1, 1/2, 5/2, 1], 1], [[1, 1/2, 11/4, 1], 1], [[1, 1/2, 7/2, 2], 1], [[1, 2, 1/2, 1], 1], [[1
%, 2, 3/4, 1], 1], [[1, 2, 5/4, 1], 1], [[1, 2, 11/8, 1], 1], [[1, 2, 7/4, 1], 1]], [[[1, 1/2, 1, 1], 1], [[1, 1/2, 3/2, 1], 1], [[1, 1/2, 5/2, 1], 1], [[1, 1/2, 11/4, 1], 1], [[1,
%1/2, 7/2, 2], 1], [[1, 2, 1/2, 1], 1], [[1, 2, 3/4, 1], 1], [[1, 2, 5/4, 1], 1], [[1, 2, 11/8, 1], 1], [[1, 2, 7/4, 1], 1]]

\begin{theorem}
For Model GA, if $C_1(q_1)=c_1q_1$ and $C_2(q_2)=c_2q_2$, there exists a unique equilibrium with $q_1,q_2>0$. Moreover, this equilibrium is locally stable if $R_{GA}^2>0$ and $R_{GA}^2<0$,
where
\begin{align*}
	R_{GA}^2 =&\, c_1^2KL + 2\,c_1c_2KL + c_2^2KL - 4\,c_1c_2K - 2\,c_1L - 2\,c_2L,\\
	R_{GA}^3
	 =& \,c_1^2KL + 2\,c_1c_2KL + c_2^2KL - 8\,c_1c_2K - 4\,c_1L - 4\,c_2L + 8\,c_1 + 8\,c_2.
\end{align*}
\end{theorem}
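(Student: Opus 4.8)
The plan is to follow the same two-stage scheme used for the quadratic-cost models: first locate the unique positive equilibrium explicitly, then reduce the local-stability question to the Jury (Schur--Cohn) conditions for the $2\times2$ Jacobian $J_{GA}$ and simplify these to the stated polynomials.

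First, I would recompute the ingredients of Model GA under linear costs. With $C_i(q_i)=c_iq_i$ the profit of the gradient firm becomes $\Pi_1=q_1/(q_1+q_2)-c_1q_1$, so that $G_1(q_1,q_2)=q_1q_2/(q_1+q_2)^2-c_1q_1$, while the reaction function of firm $2$ now solves $q_1/(q_1+q_2)^2-c_2=0$, giving $R_2(q_1)=\sqrt{q_1/c_2}-q_1$. A fixed point requires $G_1=0$ and $q_2=R_2(q_1)$, which together are equivalent to the pair of first-order conditions $q_2/(q_1+q_2)^2=c_1$ and $q_1/(q_1+q_2)^2=c_2$. Dividing and adding these yields $q_2/q_1=c_1/c_2$ and $q_1+q_2=1/(c_1+c_2)$, from which I obtain the unique solution $E=\bigl(c_2/(c_1+c_2)^2,\ c_1/(c_1+c_2)^2\bigr)$; positivity for $c_1,c_2>0$ and uniqueness are then immediate.

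Next I would evaluate $J_{GA}$ at $E$. Differentiating gives $\partial G_1/\partial q_1=q_2(q_2-q_1)/(q_1+q_2)^3-c_1$, $\partial G_1/\partial q_2=q_1(q_1-q_2)/(q_1+q_2)^3$, and $\mathrm{d}R_2/\mathrm{d}q_1=1/(2\sqrt{c_2q_1})-1$. Substituting the equilibrium values (where $(q_1+q_2)^3=(c_1+c_2)^{-3}$ and $\sqrt{c_2q_1}=c_2/(c_1+c_2)$) collapses these to the rational expressions $\partial G_1/\partial q_1=-2c_1c_2/(c_1+c_2)$, $\partial G_1/\partial q_2=c_2(c_2-c_1)/(c_1+c_2)$, and $\mathrm{d}R_2/\mathrm{d}q_1=(c_1-c_2)/(2c_2)$. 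A convenient observation is that the off-diagonal product satisfies $(\partial G_1/\partial q_2)(\mathrm{d}R_2/\mathrm{d}q_1)=-(c_1-c_2)^2/\bigl(2(c_1+c_2)\bigr)$, which keeps $\Det(J_{GA})$ compact; from here $\Tr(J_{GA})$ and $\Det(J_{GA})$ are explicit in $c_1,c_2,K,L$.

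Finally I would impose the three Jury conditions $1+\Tr+\Det>0$, $1-\Tr+\Det>0$, $1-\Det>0$. Clearing the positive factor $2(c_1+c_2)$, I expect $1-\Tr+\Det$ to collapse identically to $KL(c_1+c_2)/2$, which is strictly positive and therefore redundant; the condition $1-\Det>0$ should reduce to $R_{GA}^2<0$, and $1+\Tr+\Det>0$ to the second polynomial (correctly labelled $R_{GA}^3$) being positive. The main obstacle is purely the bookkeeping of this last step: one must verify that the several $KL$-weighted quadratic terms recombine into the perfect square $KL(c_1+c_2)^2$, and that the remaining $L$-linear and $Kc_1c_2$ contributions cancel so cleanly in the middle condition that it degenerates to the manifestly positive quantity above. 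Matching signs against the printed definitions also resolves the evident labelling slip in the statement: the stability region is $R_{GA}^3>0$ together with $R_{GA}^2<0$.
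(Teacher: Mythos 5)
Your proof is correct, and every step checks out under direct verification: the equilibrium $E=\bigl(c_2/(c_1+c_2)^2,\ c_1/(c_1+c_2)^2\bigr)$, the three Jacobian entries at $E$, and the Jury conditions. In particular, $1-\Tr(J_{GA})+\Det(J_{GA})=KL(c_1+c_2)/2>0$ identically, while clearing the positive factor $2(c_1+c_2)$ gives
\begin{align*}
2(c_1+c_2)\bigl(1-\Det(J_{GA})\bigr)&=-R_{GA}^2,\\
2(c_1+c_2)\bigl(1+\Tr(J_{GA})+\Det(J_{GA})\bigr)&=R_{GA}^3,
\end{align*}
with the $KL$-weighted terms recombining exactly through $4c_1c_2+(c_1-c_2)^2=(c_1+c_2)^2$, as you anticipated. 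Your conclusion that the stability region is $R_{GA}^2<0$ together with $R_{GA}^3>0$ is right and correctly repairs the statement's evident typo (``$R_{GA}^2>0$ and $R_{GA}^2<0$''); a numerical spot-check at $c_1=c_2=K=1$, $L=1/2$ (eigenvalues $0$ and $1/2$, $R_{GA}^2=-4$, $R_{GA}^3=6$) confirms the signs. Your route is, however, genuinely different from the paper's. The paper gives no written proof here: its declared method is to encode stability as a semi-algebraic system (the linear-cost analogue of the GA system in the text) and run the symbolic machinery of \cite{Li2014C} --- border polynomial, squarefree part, sample points, real-solution counting --- which treats all five models uniformly and is unavoidable in the quadratic-cost cases, where the equilibrium admits no usable closed form and the elimination of $(q_1,q_2)$ must be done algorithmically. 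You instead exploit the closed-form Puu-type equilibrium available under linear costs, substitute it directly into $J_{GA}$, and simplify the Jury conditions by hand. What this buys is a self-contained, human-checkable argument that moreover exposes the sign/labelling slip in the printed theorem, which the black-box computation obscures; what it gives up is generality, since the same substitution strategy would not carry over to the quadratic-cost theorem of this section. One cosmetic remark: $1-\Tr+\Det$ equals $KL(c_1+c_2)/2$ \emph{before} clearing the factor $2(c_1+c_2)$ (after clearing it is $KL(c_1+c_2)^2$), but as both quantities are manifestly positive this affects nothing.
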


\begin{proposition}
	For Model GG, if $K_1=K_2$, the stable region for the linear costs $C_i(q_i)=c_iq_i$ is strictly contained in that for the quadratic costs $C_i(q_i)=c_iq_i^2$.
\end{proposition}

\begin{figure}[htbp]
  \centering
    \subfigure[$K=1$, $L=1/2$.]{\includegraphics[width=0.4\textwidth]{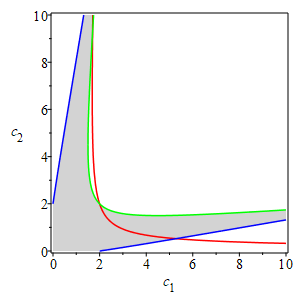}} 
    \subfigure[$K=2$, $L=1/2$.]{\includegraphics[width=0.4\textwidth]{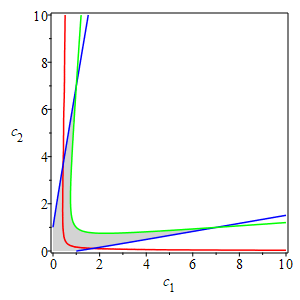}}\\
    \subfigure[$c_1=1$, $L=1/2$.]{\includegraphics[width=0.4\textwidth]{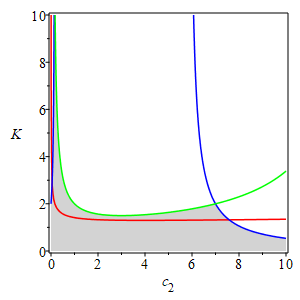}} 
    \subfigure[$c_2=1$, $K=1$.]{\includegraphics[width=0.4\textwidth]{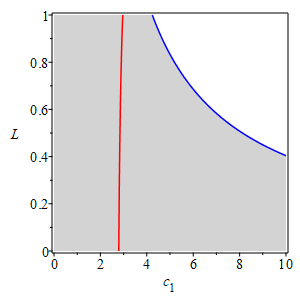}}\\
    \subfigure[$c_1=1$, $c_2=1$.]{\includegraphics[width=0.4\textwidth]{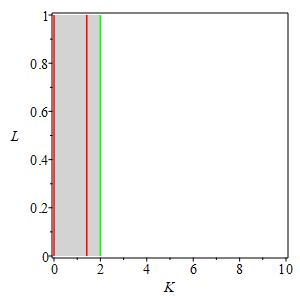}} 
    \subfigure[$c_1=1$, $c_2=5$.]{\includegraphics[width=0.4\textwidth]{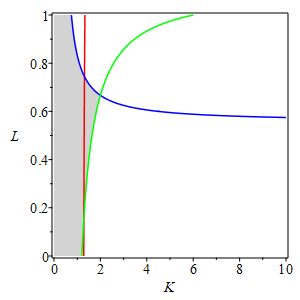}}\\ 
  \caption{The parameter space of Model GA. $R_{GA}^1=0$, $R_{GA}^2=0$ and $R_{GA}^3=0$ are marked in red, blue and green, respectively.}
    \label{fig:par-ga}
\end{figure}

\subsection{Model GG}

The Jacobian matrix is
\begin{equation}
J_{GG}=\left[
	\begin{matrix}
	1+K_1\cdot \partial G_1/\partial q_1 & K_1\cdot \partial G_1/\partial q_2\\
	K_2\cdot \partial G_2/\partial q_1 & 1+K_2\cdot \partial G_2/\partial q_2
	\end{matrix}
\right]
\end{equation}

Hence, the stable equilibria can be described by 
\begin{equation}\label{eq:semi-gg}
	\left\{\begin{split}
		&K_1\cdot G_1(q_1,q_2)=0,\\
		&K_2\cdot G_2(q_2,q_1)=0,\\
		&q_1>0,~q_2>0,\\
		&1+\Tr(J_{GG})+\Det(J_{GG})>0,\\
		&1-\Tr(J_{GG})+\Det(J_{GG})>0,\\
		&1-\Det(J_{GG})>0,\\
		&c_1>0,~c_2>0,~K_1>0,~K_2>0.
	\end{split}
	\right.
\end{equation}

% ca*cb*ka*kb*(ca-cb)*(ca+cb)*(-1024*ca^3*cb^3*ka^4*kb^4+384*ca^3*cb^2*ka^4*kb^2+384*ca^3*cb^2*ka^3*kb^3+384*ca^2*cb^3*ka^3*kb^3+384*ca^2*cb^3*ka^2*kb^4+ca^4*ka^4-18*ca^3
%*cb*ka^4-32*ca^3*cb*ka^3*kb-18*ca^3*cb*ka^2*kb^2+81*ca^2*cb^2*ka^4+288*ca^2*cb^2*ka^3*kb+420*ca^2*cb^2*ka^2*kb^2+288*ca^2*cb^2*ka*kb^3+81*ca^2*cb^2*kb^4-18*ca*cb^3*ka^2*kb^2-32*ca
%*cb^3*ka*kb^3-18*ca*cb^3*kb^4+cb^4*kb^4)*(-64*ca^4*cb^4*ka^4*kb^4+96*ca^4*cb^3*ka^4*kb^2-32*ca^4*cb^3*ka^3*kb^3-32*ca^3*cb^4*ka^3*kb^3+96*ca^3*cb^4*ka^2*kb^4+ca^5*cb*ka^4-18*ca^4*
%cb^2*ka^4+32*ca^4*cb^2*ka^3*kb-18*ca^4*cb^2*ka^2*kb^2+81*ca^3*cb^3*ka^4+96*ca^3*cb^3*ka^3*kb-92*ca^3*cb^3*ka^2*kb^2+96*ca^3*cb^3*ka*kb^3+81*ca^3*cb^3*kb^4-18*ca^2*cb^4*ka^2*kb^2+
%32*ca^2*cb^4*ka*kb^3-18*ca^2*cb^4*kb^4+ca*cb^5*kb^4+8*ca^4*cb*ka^2-8*ca^4*cb*ka*kb-16*ca^3*cb^2*ka^2-120*ca^3*cb^2*ka*kb-120*ca^3*cb^2*kb^2-120*ca^2*cb^3*ka^2-120*ca^2*cb^3*ka*kb-\
%16*ca^2*cb^3*kb^2-8*ca*cb^4*ka*kb+8*ca*cb^4*kb^2-4*ca^4+16*ca^3*cb-24*ca^2*cb^2+16*ca*cb^3-4*cb^4)*(-1024*ca^4*cb^4*ka^4*kb^4+384*ca^4*cb^3*ka^4*kb^2-640*ca^4*cb^3*ka^3*kb^3-640*
%ca^3*cb^4*ka^3*kb^3+384*ca^3*cb^4*ka^2*kb^4+ca^5*cb*ka^4-18*ca^4*cb^2*ka^4+96*ca^4*cb^2*ka^3*kb-210*ca^4*cb^2*ka^2*kb^2+81*ca^3*cb^3*ka^4-96*ca^3*cb^3*ka^3*kb-732*ca^3*cb^3*ka^2*
%kb^2-96*ca^3*cb^3*ka*kb^3+81*ca^3*cb^3*kb^4-210*ca^2*cb^4*ka^2*kb^2+96*ca^2*cb^4*ka*kb^3-18*ca^2*cb^4*kb^4+ca*cb^5*kb^4+8*ca^4*cb*ka^2-40*ca^4*cb*ka*kb-16*ca^3*cb^2*ka^2-88*ca^3*
%cb^2*ka*kb-120*ca^3*cb^2*kb^2-120*ca^2*cb^3*ka^2-88*ca^2*cb^3*ka*kb-16*ca^2*cb^3*kb^2-40*ca*cb^4*ka*kb+8*ca*cb^4*kb^2-4*ca^4+16*ca^3*cb-24*ca^2*cb^2+16*ca*cb^3-4*cb^4)

\begin{theorem}
For Model GG, there exists a unique equilibrium with $q_1,q_2>0$. Moreover, this equilibrium is locally stable if $R_{GG}^1>0$ and $R_{GG}^2<0$,
where
\begin{align*} 
R_{GG}^1 =\,& -1024\, c_1^3 c_2^3 K_1^4 K_2^4+384\, c_1^3 c_2^2 K_1^4 K_2^2+384\, c_1^3 c_2^2 K_1^3 K_2^3+384\, c_1^2 c_2^3 K_1^3 K_2^3+384\, c_1^2 c_2^3 K_1^2 K_2^4+c_1^4 K_1^4\\
&-18\, c_1^3 c_2 K_1^4-32 c_1^3 c_2 K_1^3 K_2-18\, c_1^3 c_2 K_1^2 K_2^2+81\, c_1^2 c_2^2 K_1^4+288\, c_1^2 c_2^2 K_1^3 K_2+420\, c_1^2 c_2^2 K_1^2 K_2^2\\
&+288\, c_1^2 c_2^2 K_1 K_2^3+81\, c_1^2 c_2^2 K_2^4-18\, c_1 c_2^3 K_1^2 K_2^2-32 c_1 c_2^3 K_1 K_2^3-18\, c_1 c_2^3 K_2^4+c_2^4 K_2^4,
\end{align*}
and
\begin{align*}
R_{GG}^2 =\,& -64\, c_1^4 c_2^4 K_1^4 K_2^4+96\, c_1^4 c_2^3 K_1^4 K_2^2-32\, c_1^4 c_2^3 K_1^3 K_2^3-32\, c_1^3 c_2^4 K_1^3 K_2^3+96\, c_1^3 c_2^4 K_1^2 K_2^4+c_1^5 c_2 K_1^4\\
&-18\, c_1^4 c_2^2 K_1^4+32\, c_1^4 c_2^2 K_1^3 K_2-18\, c_1^4 c_2^2 K_1^2 K_2^2+81\, c_1^3 c_2^3 K_1^4+96\, c_1^3 c_2^3 K_1^3 K_2-92\, c_1^3 c_2^3 K_1^2 K_2^2\\
&+96\, c_1^3 c_2^3 K_1 K_2^3+81\, c_1^3 c_2^3 K_2^4-18\, c_1^2 c_2^4 K_1^2 K_2^2+32\, c_1^2 c_2^4 K_1 K_2^3-18\, c_1^2 c_2^4 K_2^4+c_1 c_2^5 K_2^4+8\, c_1^4 c_2 K_1^2\\
&-8\, c_1^4 c_2 K_1 K_2-16 c_1^3 c_2^2 K_1^2-120\, c_1^3 c_2^2 K_1 K_2 -120\, c_1^3 c_2^2 K_2^2-120\, c_1^2 c_2^3 K_1^2-120\, c_1^2 c_2^3 K_1 K_2\\
&-16 \,c_1^2 c_2^3 K_2^2-8\, c_1 c_2^4 K_1 K_2+8\, c_1 c_2^4 K_2^2-4 \,c_1^4+16\,c_1^3c_2-24\,c_1^2c_2^2+16\,c_1c_2^3-4\,c_2^4.\\
\end{align*}
%\begin{align*}
%R_{GG}^1 =& -1024*ca^3*cb^3*k^4*l^4+384*ca^3*cb^2*k^4*l^2+384*ca^3*cb^2*k^3*l^3+384*ca^2*cb^3*k^3*l^3+384*ca^2*cb^3*k^2*l^4+ca^4*k^4-18*ca^3*cb*k^4-32*ca^3*cb*k^3*l-18*ca^3*cb*k^2*l^2+81*ca^2*cb^2*k^4+288*ca^2*cb^2*k^3*l+420*ca^2*cb^2*k^2*l^2+288*ca^2*cb^2*k*l^3+81*ca^2*cb^2*l^4-18*ca*cb^3*k^2*l^2-32*ca*cb^3*k*l^3-18*ca*cb^3*l^4+cb^4*l^4,\\
%R_{GG}^2 =& -64*ca^4*cb^4*k^4*l^4+96*ca^4*cb^3*k^4*l^2-32*ca^4*cb^3*k^3*l^3-32*ca^3*cb^4*k^3*l^3+96*ca^3*cb^4*k^2*l^4+ca^5*cb*k^4-18*ca^4*cb^2*k^4+32*ca^4*cb^2*k^3*l-18*ca^4*cb^2*k^2*l^2+81*ca^3*cb^3*k^4+96*ca^3*cb^3*k^3*l-92*ca^3*cb^3*k^2*l^2+96*ca^3*cb^3*k*l^3+81*ca^3*cb^3*l^4-18*ca^2*cb^4*k^2*l^2+32*ca^2*cb^4*k*l^3-18*ca^2*cb^4*l^4+ca*cb^5*l^4+8*ca^4*cb*k^2-8*ca^4*cb*k*l-16*ca^3*cb^2*k^2-120*ca^3*cb^2*k*l-120*ca^3*cb^2*l^2-120*ca^2*cb^3*k^2-120*ca^2*cb^3*k*l-16*ca^2*cb^3*l^2-8*ca*cb^4*k*l+8*ca*cb^4*l^2-4*ca^4+16*ca^3*cb-24*ca^2*cb^2+16*ca*cb^3-4*cb^4.\\
%\end{align*}

\end{theorem}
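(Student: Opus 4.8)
The plan is to follow the two-part template already used for Models GR through GA: first pin down the positive equilibrium by elementary algebra, then reduce local asymptotic stability to a semi-algebraic feasibility question and settle it with the real-root-classification method of \cite{Li2014C}. The existence-and-uniqueness assertion is in fact the common fact recorded in Section~1, so I would only verify it in the form needed here. Since $K_1,K_2>0$, a fixed point of Model GG must satisfy $G_1(q_1,q_2)=0$ and $G_2(q_2,q_1)=0$; clearing the common positive denominator $(q_1+q_2)^2$ gives $q_1q_2=2c_1q_1^2(q_1+q_2)^2$ and $q_1q_2=2c_2q_2^2(q_1+q_2)^2$. Equating the right-hand sides forces $c_1q_1^2=c_2q_2^2$, so in the open positive orthant $q_2=\sqrt{c_1/c_2}\,q_1$ is uniquely determined; substituting back leaves a single admissible positive value of $q_1$, which one checks to be exactly the point $E$. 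This also yields a polynomial (triangular) description of $E$ that I would use in place of its radical-laden closed form.

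For stability I would evaluate $J_{GG}$ at $E$ and invoke the Jury (Schur--Cohn) criterion: the fixed point is locally asymptotically stable precisely when $1+\Tr(J_{GG})+\Det(J_{GG})>0$, $1-\Tr(J_{GG})+\Det(J_{GG})>0$, and $1-\Det(J_{GG})>0$, exactly the inequalities already recorded in \eqref{eq:semi-gg}. Rather than substitute the radicals of $E$, I would reduce each of the three expressions modulo the polynomial system $\{G_1=0,\,G_2=0\}$ and cancel the positive factors $(q_1+q_2)^k$, turning the three conditions into polynomial inequalities in $(c_1,c_2,K_1,K_2)$ after eliminating $q_1,q_2$ by a Gr\"obner or resultant computation. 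One then forms the border polynomial of \eqref{eq:semi-gg} and factors its squarefree part; I expect $R_{GG}^1$ and $R_{GG}^2$ to appear among its factors as the only boundaries actually crossed inside $\{c_1,c_2,K_1,K_2>0\}$, the remaining factors being strictly signed there. As a consistency check, the part of $R_{GG}^2$ free of $K_1,K_2$ equals $-4(c_1-c_2)^4$, which is compatible with its being one genuine stability surface (and with the third Jury condition being non-binding in the relevant region).

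The final step is the sample-point analysis. The complement of the border hypersurface partitions the positive parameter region into finitely many connected cells on which the truth values of all three Jury conditions are constant; equivalently, the number of real solutions of the full system \eqref{eq:semi-gg} is constant on each cell (and equals $1$ on stable cells, $0$ on unstable ones, since the equilibrium is unique). I would choose one representative point per cell, count the real solutions of \eqref{eq:semi-gg} there, and read off the stable cells, concluding that local stability is equivalent to $R_{GG}^1>0$ together with $R_{GG}^2<0$.

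The main obstacle is precisely this symbolic stage. Because the parameter space is four-dimensional, one cannot rely on a picture as in the plots for Models GR--GL, so correctness rests entirely on two guarantees: that the border polynomial captures \emph{every} locus at which the real-solution count of \eqref{eq:semi-gg} can change (this is what the method of \cite{Li2014C} provides), and that the chosen sample set provably meets every connected cell of the complement. The elimination producing the heavy factors $R_{GG}^1$ and $R_{GG}^2$ is large but routine once the radicals in $E$ have been removed; the delicate point is certifying the cell decomposition and the completeness of the sample-point coverage in four dimensions.
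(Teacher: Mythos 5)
Your proposal is correct and follows essentially the same route as the paper: the paper also reduces local stability to the semi-algebraic system \eqref{eq:semi-gg} via the Jury conditions on $J_{GG}$ and then applies the border-polynomial/squarefree-part/sample-point classification of \cite{Li2014C} (spelled out explicitly only for Model GR), arriving at $R_{GG}^1>0$ and $R_{GG}^2<0$. Your elementary uniqueness argument via $c_1q_1^2=c_2q_2^2$ simply makes explicit what the paper dismisses as ``easy to verify,'' and your checks (e.g.\ the $K$-free part of $R_{GG}^2$ being $-4(c_1-c_2)^4$) are consistent with the stated polynomials.
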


%ca*cb*k*l*(ca+cb)*(ca*k*l+cb*k*l-2*k-2*l)*(ca^2*cb*k*l+ca*cb^2*k*l-2*ca*cb*k-2*ca*cb*l+2*ca+2*cb)*(ca^2*cb*k*l+ca*cb^2*k*l-4*ca*cb*k-4*ca*cb*l+4*ca+4*cb)

%[[[1, 1/2, 1, 1], 1], [[1, 1/2, 3/2, 1], 1], [[1, 1/2, 5/2, 1], 1], [[1, 1/2, 11/4, 1], 1], [[1, 1/2, 7/2, 2], 1], [[1, 2, 1/2, 1], 1], [[1
%, 2, 3/4, 1], 1], [[1, 2, 5/4, 1], 1], [[1, 2, 11/8, 1], 1], [[1, 2, 7/4, 1], 1]], [[[1, 1/2, 1, 1], 1], [[1, 1/2, 3/2, 1], 1], [[1, 1/2, 5/2, 1], 1], [[1, 1/2, 11/4, 1], 1], [[1,
%1/2, 7/2, 2], 1], [[1, 2, 1/2, 1], 1], [[1, 2, 3/4, 1], 1], [[1, 2, 5/4, 1], 1], [[1, 2, 11/8, 1], 1], [[1, 2, 7/4, 1], 1]]

\begin{theorem}
For Model GG, if $C_1(q_1)=c_1q_1$ and $C_2(q_2)=c_2q_2$, there exists a unique equilibrium with $q_1,q_2>0$. Moreover, this equilibrium is locally stable if $R_{GG}^3>0$ and $R_{GG}^4<0$,
where
\begin{align*}
	R_{GG}^3 =&\, c_1K_1K_2+c_2K_1K_2-2\,K_1-2\,K_2,\\
	R_{GG}^4 =&\, c_1^2c_2K_1K_2+c_1c_2^2K_1K_2-4\,c_1c_2K_1-4\,c_1c_2K_2+4\,c_1+4\,c_2.
\end{align*}
\end{theorem}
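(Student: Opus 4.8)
The plan is to follow exactly the route used for the earlier models: locate the unique positive fixed point in closed form, linearize the map there, and reduce the three Jury (Schur--Cohn) conditions listed in \eqref{eq:semi-gg} to sign conditions on $R_{GG}^3$ and $R_{GG}^4$. For existence and uniqueness, note that with linear costs
\[
G_i(q_i,q_{-i})=\frac{q_iq_{-i}}{(q_i+q_{-i})^2}-c_iq_i ,
\]
so a fixed point with $K_1G_1=K_2G_2=0$ forces, after dividing by $q_i>0$,
\[
\frac{q_2}{(q_1+q_2)^2}=c_1,\qquad \frac{q_1}{(q_1+q_2)^2}=c_2 .
\]
Dividing these gives $q_1:q_2=c_2:c_1$ and adding them gives $(q_1+q_2)^{-1}=c_1+c_2$, whence the unique positive solution is $E=\bigl(c_2/(c_1+c_2)^2,\ c_1/(c_1+c_2)^2\bigr)$, the specialization of the Nash point.

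Next I would compute $J_{GG}$ at $E$. Writing $s=c_1+c_2$ and using
\[
\frac{\partial G_1}{\partial q_1}=\frac{q_2(q_2-q_1)}{(q_1+q_2)^3}-c_1,\qquad
\frac{\partial G_1}{\partial q_2}=\frac{q_1(q_1-q_2)}{(q_1+q_2)^3}
\]
together with the symmetric expressions for $G_2$, evaluation at $E$ collapses everything to
\[
\frac{\partial G_1}{\partial q_1}\Big|_E=\frac{\partial G_2}{\partial q_2}\Big|_E=-\frac{2c_1c_2}{s},\qquad
\frac{\partial G_1}{\partial q_2}\Big|_E=\frac{c_2(c_2-c_1)}{s},\qquad
\frac{\partial G_2}{\partial q_1}\Big|_E=\frac{c_1(c_1-c_2)}{s}.
\]
Hence
\[
\Tr(J_{GG})=2-\frac{2c_1c_2(K_1+K_2)}{s},\qquad
\Det(J_{GG})=1-\frac{2c_1c_2(K_1+K_2)}{s}+K_1K_2c_1c_2 ,
\]
where the determinant simplifies because its quadratic-in-$K$ part equals $\tfrac{K_1K_2c_1c_2}{s^2}\bigl[4c_1c_2+(c_1-c_2)^2\bigr]=K_1K_2c_1c_2$, the bracket being exactly $(c_1+c_2)^2$.

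Feeding these into the three inequalities, the condition $1-\Tr(J_{GG})+\Det(J_{GG})>0$ reduces identically to $K_1K_2c_1c_2>0$ and is automatic, while the remaining two satisfy
\[
1+\Tr(J_{GG})+\Det(J_{GG})=\frac{R_{GG}^4}{s},\qquad
1-\Det(J_{GG})=-\frac{c_1c_2}{s}\,R_{GG}^3 .
\]
Thus, up to the manifestly positive factors $1/s$ and $c_1c_2/s$, the stability region is cut out precisely by the two border polynomials, and the local stability of the fixed point of the full nonlinear map follows from these eigenvalue conditions in the usual way. The correct sense of each inequality is then pinned down exactly as in the quadratic case described in \cite{Li2014C}: choose one sample point in each connected component of the complement of $\{R_{GG}^3R_{GG}^4=0\}$ in the positive $(c_1,c_2,K_1,K_2)$ orthant and test the conditions there.

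The step I expect to be the genuine obstacle is this last sign bookkeeping rather than any of the algebra, which is routine once the simplification $4c_1c_2+(c_1-c_2)^2=(c_1+c_2)^2$ is spotted. The two displayed identities already show that stability is equivalent to a \emph{definite} pair of inequalities on $R_{GG}^3,R_{GG}^4$; the easiest way to orient them is the symmetric check $c_1=c_2=1$, $K_1=K_2=K$, where $E=(\tfrac14,\tfrac14)$ and $J_{GG}=(1-K)I$, so stability means $0<K<2$, while $R_{GG}^3=2K(K-2)$ and $R_{GG}^4=2(K-2)^2$. This single evaluation fixes the orientation of both boundaries and should be carried out carefully to make sure it is consistent with the signs recorded in the statement; everything else is substitution.
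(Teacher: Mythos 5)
Your route is sound and genuinely different from the paper's: the paper gives no written proof of this theorem at all, and like the other stability results here it is delegated to the symbolic pipeline described at the start of Section 2 (encode the Jury conditions as the semi-algebraic system \eqref{eq:semi-gg}, compute the border polynomial, test sample points), whereas you produce a closed-form, human-checkable argument. Writing $s=c_1+c_2$, your equilibrium $E=\bigl(c_2/s^2,\,c_1/s^2\bigr)$, the four partial derivatives at $E$, the simplification $4c_1c_2+(c_1-c_2)^2=s^2$, and the three exact identities
\[
1-\Tr(J_{GG})+\Det(J_{GG})=c_1c_2K_1K_2,\qquad
1+\Tr(J_{GG})+\Det(J_{GG})=\frac{R_{GG}^4}{s},\qquad
1-\Det(J_{GG})=-\frac{c_1c_2}{s}\,R_{GG}^3
\]
all check out; this buys a transparency the black-box computation does not (one sees immediately that the flip condition $1-\Tr+\Det>0$ is automatic).

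However, you hedged exactly where you should have committed, and in doing so left the one substantive discrepancy unresolved. Since the cofactors $1/s$ and $c_1c_2/s$ are positive, your identities already prove that local stability holds if and only if $R_{GG}^4>0$ \emph{and} $R_{GG}^3<0$; the closing ``sample one point per component'' step is superfluous, as no sign ambiguity remains. Your own symmetric check confirms this orientation: at $c_1=c_2=1$, $K_1=K_2=K$ the map is stable exactly for $0<K<2$, where $R_{GG}^3=2K(K-2)<0$ and $R_{GG}^4=2(K-2)^2>0$. This is the \emph{opposite} of the printed statement ($R_{GG}^3>0$, $R_{GG}^4<0$), and the printed condition is not vacuously harmless: at $c_1=c_2=1$, $K_1=10$, $K_2=3/2$ one has $R_{GG}^3=7>0$ and $R_{GG}^4=-8<0$, yet $\Det(J_{GG})=9/2>1$ (eigenvalues $-1/2$ and $-9$), so the equilibrium is unstable precisely where the theorem as printed declares stability. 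The inequality directions in the statement are therefore swapped — a typo of a piece with several others in this draft — and a complete answer should state the corrected conditions outright rather than deferring to a consistency check it has already performed.
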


\begin{proposition}
	For Model GG, if $K_1=K_2$, the stable region for the linear costs $C_i(q_i)=c_iq_i$ is strictly contained in that for the quadratic costs $C_i(q_i)=c_iq_i^2$.
\end{proposition}

\begin{proposition}
	For Model GG, even if $c_1>10^{100}$ and $c_2>10^{100}$, the stable region for the linear costs $C_i(q_i)=c_iq_i$ is not strictly contained in that for the quadratic costs $C_i(q_i)=c_iq_i^2$.
\end{proposition}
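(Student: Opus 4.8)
The plan is to refute containment by exhibiting a single parameter point $(c_1,c_2,K_1,K_2)$ with $c_1,c_2>10^{100}$ that lies in the linear-cost stable region but not in the quadratic-cost one; once containment itself fails, strict containment fails a fortiori. The decisive structural hint comes from the immediately preceding proposition: on the slice $K_1=K_2$ the linear region \emph{is} strictly contained in the quadratic one, so any witness must be genuinely asymmetric, $K_1\neq K_2$. I would therefore search for a point with one large and one small adjustment speed.

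To keep the bookkeeping tractable I would specialize to $c_1=c_2=c$ (while keeping $K_1\neq K_2$) and rewrite the four border polynomials of the two GG theorems through the symmetric functions $a=K_1K_2$ and $b=K_1+K_2$. The linear conditions collapse to
\begin{equation*}
R_{GG}^{3}=2(ca-b)>0,\qquad R_{GG}^{4}=2c^{3}a-4c^{2}b+8c<0,
\end{equation*}
i.e. $ca>b$ together with $c^{2}a-2cb+4<0$. The two quadratic polynomials factor through positive powers of $c$: one finds $R_{GG}^{1}=64\,c^{4}\big(-16c^{2}a^{4}+6ca^{2}b^{2}+b^{4}\big)$ and $R_{GG}^{2}=32\,c^{5}\big(-2c^{3}a^{4}+c^{2}(3a^{2}b^{2}-8a^{3})+c(2b^{4}-4ab^{2}-8a^{2})-4b^{2}\big)$, so the quadratic test $R_{GG}^{1}>0,\ R_{GG}^{2}<0$ reduces to sign conditions on these short factors.

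Next I would impose the scaling $b=\lambda c a$ with a fixed $\lambda\in(1/2,1)$ and a fixed $a$ of order one; this forces $K_1\approx\lambda c a$ large and $K_2\approx 1/(\lambda c)$ small, exactly the asymmetry anticipated above. The linear conditions then hold for all large $c$, since $ca>\lambda ca$ and $c^{2}a(1-2\lambda)+4<0$ once $\lambda>1/2$. On the other hand, after the substitution the parenthesized factor of $R_{GG}^{2}$ is a polynomial in $c$ whose leading term is $2\lambda^{4}a^{4}c^{5}>0$, so $R_{GG}^{2}>0$ for large $c$, violating $R_{GG}^{2}<0$. Concretely I would take $c=10^{200}$, $a=1$, $\lambda=3/4$, that is $K_{1,2}=\tfrac12\big(\tfrac34 c\pm\sqrt{\tfrac{9}{16}c^{2}-4}\big)>0$. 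A direct substitution gives $R_{GG}^{3}=c/2>0$ and $R_{GG}^{4}=-c^{3}+8c<0$ (linearly stable), while the parenthesized factor of $R_{GG}^{2}$ equals $\tfrac{81}{128}c^{5}+\cdots>0$, so $R_{GG}^{2}>0$ (not quadratically stable, even though $R_{GG}^{1}>0$ at this point). This single point establishes the claim.

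Once the point is in hand the argument is a finite sign verification, so the only genuine obstacle is locating it. Two things make that nontrivial: recognizing from the $K_1=K_2$ proposition that the witness must be strongly asymmetric, and taming the enormous expressions for $R_{GG}^{1}$ and $R_{GG}^{2}$. The specialization $c_1=c_2$ followed by the $(a,b)$ substitution is precisely what reduces those expressions to the compact factored forms above and makes the asymptotic sign of $R_{GG}^{2}$ legible; everything after that is routine arithmetic.
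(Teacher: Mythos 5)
Your proof is correct, and it is worth noting that the paper offers \emph{no} proof of this proposition at all (its sibling containment propositions are explicitly ``left to the readers''), so your argument is a genuine completion rather than a variant of the paper's route. I checked your reductions independently: with $c_1=c_2=c$, $a=K_1K_2$, $b=K_1+K_2$, the coefficient collections do give $R_{GG}^3=2(ca-b)$, $R_{GG}^4=2c^3a-4c^2b+8c$, $R_{GG}^1=64c^4\bigl(-16c^2a^4+6ca^2b^2+b^4\bigr)$ (the degree-four-in-$K$ block collapses to $64b^4$ via the binomial coefficients $64,256,384,256,64$), and $R_{GG}^2=32c^5\bigl(-2c^3a^4+c^2(3a^2b^2-8a^3)+c(2b^4-4ab^2-8a^2)-4b^2\bigr)$ (the $c^6$-block gives $64b^4-128ab^2-256a^2$, the $c^5$-block gives $-128b^2$, and the $K$-free block vanishes identically). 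At your witness $c=10^{200}$, $a=1$, $b=\tfrac34 c$, the parenthesized factor of $R_{GG}^2$ equals $\tfrac{81}{128}c^5+\tfrac{27}{16}c^4-\tfrac{17}{4}c^3-\tfrac{41}{4}c^2-8c>0$, while $R_{GG}^3=c/2>0$, $R_{GG}^4=-c^3+8c<0$, and $K_{1,2}=\tfrac12\bigl(\tfrac34 c\pm\sqrt{\tfrac{9}{16}c^2-4}\bigr)$ are real and positive (positive sum and product), so the point is linearly stable, quadratically unstable, and refutes containment, hence strict containment, a fortiori. Your structural observation that the witness is forced to have $K_1\neq K_2$ by the preceding proposition is also sound and is what makes the search converge.

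One interpretive caveat, not a gap: the theorems are phrased as sufficiency (``locally stable if''), so identifying the stable regions with the semialgebraic sets $\{R_{GG}^3>0,\,R_{GG}^4<0\}$ and $\{R_{GG}^1>0,\,R_{GG}^2<0\}$ tacitly uses that these Jury-derived conditions are also necessary away from the border variety. That is exactly how the paper itself treats them in the adjacent propositions, and since your point satisfies $R_{GG}^2>0$ strictly it lies in an unstable cell, so the reading is safe; you might add one sentence making this explicit.
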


\begin{figure}[htbp]
  \centering
     \subfigure[$K_1=K_2$.]{\includegraphics[width=0.4\textwidth]{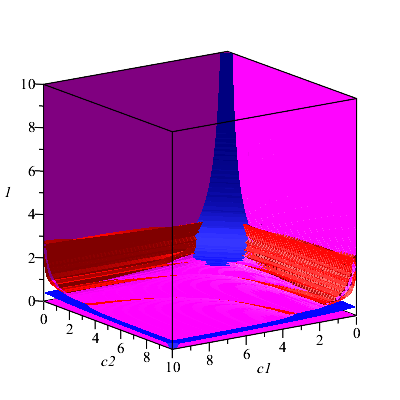}} 
    \subfigure[$K_1=K_2$.]{\includegraphics[width=0.4\textwidth]{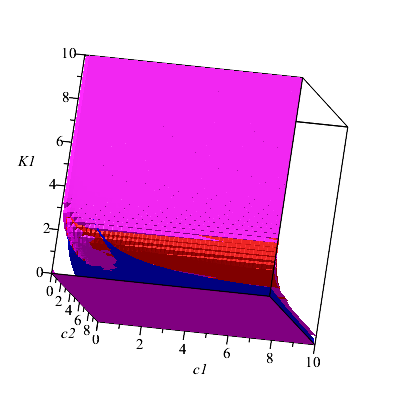}}\\
    \subfigure[$K_1=K_2$, $c_1=c_2$.]{\includegraphics[width=0.4\textwidth]{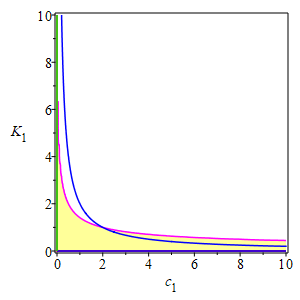}} 
    \subfigure[$K_1=K_2=1$.]{\includegraphics[width=0.4\textwidth]{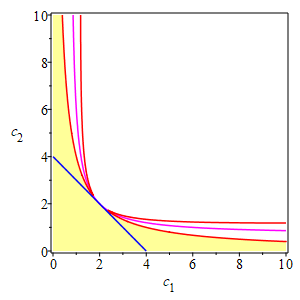}}\\
    \subfigure[$c_1=1$, $c_2=1$.]{\includegraphics[width=0.4\textwidth]{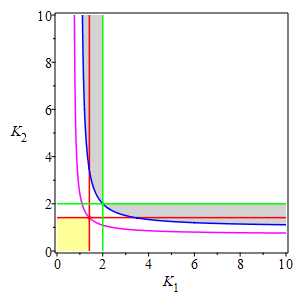}} 
    \subfigure[$c_1=1$, $c_2=1/2$.]{\includegraphics[width=0.4\textwidth]{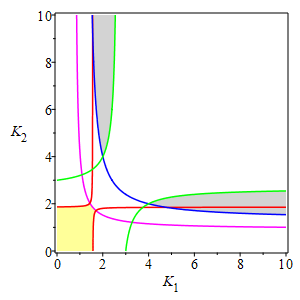}}\\
  \caption{The parameter space of Model GG. The surfaces $R_{GG}^1=0$, $R_{GG}^2=0$, $R_{GG}^3=0$ and $R_{GG}^4=0$ are marked in magenta, red, blue and green, respectively.}
    \label{fig:par-gg}
    
\end{figure}

%\section{Bifurcation Analysis}
%
%\section{Concluding Remarks}
%
%\section*{Acknowledgements}
%
%The authors are grateful to the anonymous referees for their helpful comments. This work has been supported by Philosophy and Social Science Foundation of Guangdong under Grant GD21CLJ01, Major Research and Cultivation Project of Dongguan City College under Grant 2021YZDYB04Z and Social Development Science and Technology Project of Dongguan under Grant 20211800900692. 
%

\bibliographystyle{abbrv}
\bibliography{duopoly.bib}

%\section*{Appendix}
%\tiny
%\begin{align*}
%
%\end{autobreak}\\
%
%\begin{autobreak}

%\end{align*}

%
%\newpage
%
%\section*{Highlights}
%
%\begin{enumerate}
%    \item The complete condition of the local stability for the modified version of Puu's monopoly model is obtained for the first time.
%    \item We investigate periodic solutions as well as their stability rigorously through symbolic computations rather than intuitively through numerical simulations.
%    \item The existence of chaos in the sense of Li-Yorke is proved by finding snapback repellers and 3-cycle orbits respectively for the two models.
%\end{enumerate}
%
%
%
%\section*{Cover Letter}
%\begin{lstlisting}[breaklines=true, columns=flexible]
%
%Dear editor,
%
%I would like to submit the enclosed manuscript entitled "***" by *** for possible publication in Communications in ***. 
%
%[abstract]
%
%Highlights of this work include the following.
%
%[highlights]
%
%The contact information of the corresponding author is as follows.
%
%   Name: ***;
%   Address: ***; 
%   E-mail: ***;
%   Mobile: ***;
%   Fax: N/A.
%
%Thank you very much for consideration!
%
%Sincerely yours,
%***
%
%\end{lstlisting}

\end{CJK}
\end{document}